\definecolor{pink}{HTML}{F282B4}
\definecolor{corn}{HTML}{6495ED}
\newcommand{\KL}{D_{\text{KL}}}
\newcommand{\Hv}{H_{\mathrm{v}}}
\newcommand{\Hh}{H_{\mathrm{h}}}
\newcommand{\Hint}{H_{\mathrm{int}}}
\newcommand{\Wv}{\mathcal{W}_{\mathrm{v}}}
\newcommand{\Wh}{\mathcal{W}_{\mathrm{h}}}
\newcommand{\Wint}{\mathcal{W}_{\mathrm{int}}}
\newcommand{\RBM}{\mathrm{RBM}}
\newcommand{\sqRBM}{\mathrm{sqRBM}}
\newcommand{\sqBM}{\mathrm{sqBM}}
\newcommand{\QRBM}{\mathrm{QRBM}}
\newcommand{\TVD}{\mathrm{TVD}}
\newcolumntype{C}[1]{>{\centering\arraybackslash}p{#1}}
\newtheorem{theorem}{Theorem}
\newtheorem{definition}{Definition}
\newtheorem{lemma}{Lemma}
\newtheorem{proposition}{Proposition}
\newtheorem{corollary}{Corollary}
\begin{document}

\title{Expressive equivalence of classical and quantum restricted Boltzmann machines}

\date{\today}

\author{Maria Demidik}
\email{maria.demidik@desy.de}
\affiliation{Deutsches Elektronen-Synchrotron DESY, Platanenallee 6, 15738 Zeuthen, Germany}
\affiliation{Computation-Based Science and Technology Research Center, The Cyprus Institute, 20 Kavafi Street, 2121 Nicosia, Cyprus}

\author{Cenk T\"uys\"uz}
\affiliation{Deutsches Elektronen-Synchrotron DESY, Platanenallee 6, 15738 Zeuthen, Germany}
\affiliation{Institut für Physik, Humboldt-Universit\"at zu Berlin, Newtonstr. 15, 12489 Berlin, Germany}

\author{Nico Piatkowski}
\affiliation{Fraunhofer IAIS, Schloss Birlinghoven, 53757, Sankt Augustin, Germany}

\author{Michele Grossi}
\affiliation{European Organisation for Nuclear Research (CERN), Espl. des Particules 1211 Geneva 23, Switzerland}

\author{Karl Jansen}
\affiliation{Computation-Based Science and Technology Research Center, The Cyprus Institute, 20 Kavafi Street, 2121 Nicosia, Cyprus}
\affiliation{Deutsches Elektronen-Synchrotron DESY, Platanenallee 6, 15738 Zeuthen, Germany}

\begin{abstract}
Quantum computers offer the potential for efficiently sampling from complex probability distributions, attracting increasing interest in generative modeling within quantum machine learning. This surge in interest has driven the development of numerous generative quantum models, yet their trainability and scalability remain significant challenges. A notable example is a quantum restricted Boltzmann machine (QRBM), which is based on the Gibbs state of a parameterized non-commuting Hamiltonian. While QRBMs are expressive, their non-commuting Hamiltonians make gradient evaluation computationally demanding, even on fault-tolerant quantum computers. In this work, we propose a semi-quantum restricted Boltzmann machine (sqRBM), a model designed for classical data that mitigates the challenges associated with previous QRBM proposals. The sqRBM Hamiltonian is commuting in the visible subspace while remaining non-commuting in the hidden subspace. This structure allows us to derive closed-form expressions for both output probabilities and gradients. Leveraging these analytical results, we demonstrate that sqRBMs share a close relationship with classical restricted Boltzmann machines (RBM). Our theoretical analysis predicts that, to learn a given probability distribution, an RBM requires three times as many hidden units as an sqRBM, while both models have the same total number of parameters. We validate these findings through numerical simulations involving up to 100 units. Our results suggest that sqRBMs could enable practical quantum machine learning applications in the near future by significantly reducing quantum resource requirements. 
\end{abstract}

\maketitle
\section{Introduction}

\begin{figure*}[!t]
    \centering
    \includegraphics[width=\linewidth]{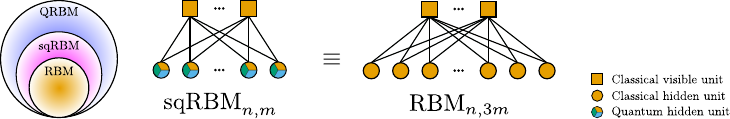}
    \caption{\textbf{Summary of main results.} This work introduces semi-quantum restricted Boltzmann machines (sqRBM) as an intermediate model, satisfying the relation $\QRBM \supseteq \sqRBM \supseteq \RBM$. sqRBMs generalize RBMs by rendering the hidden units \textit{quantum} through the use of non-commuting Hamiltonians. In Theorem~\ref{theorem:equivalance}, we show that $\sqRBM_{n,m} \equiv \RBM_{n,3m}$, where $n$ and $m$ denote the number of visible and hidden units, respectively, with both models having the same number of parameters. In pedestrian terms, RBMs require three times as many hidden units as sqRBMs to learn the same target distribution. \looseness-1}
    \label{fig:summary}
\end{figure*}

Boltzmann machines (BM) are a prominent example of machine learning models based on statistical physics~\cite{smolensky_bm, ackley_learning_1985}. They have been widely applied in areas such as collaborative filtering, dimensionality reduction, pattern recognition and generative modeling~\cite{salakhutdinov_restricted_2007, hinton_reducing_2006, salakhutdinov_deep_2009, yichuan_tang_robust_2012}. A BM is defined with a system of binary visible and hidden units. Visible units serve as input and output of the model, while hidden units represent latent variables. Overall, the model forms a classical Ising model, which can be defined through a parameterized commuting Hamiltonian.

The representational power of BMs depends on the number of hidden units~\cite{le_roux_representational_2008} and their connectivity~\cite{srbm}. Each training iteration involves Gibbs state preparation to estimate the gradients. This makes training BMs computationally demanding. To address this, a common approach is to restrict connectivity, resulting in restricted Boltzmann machines (RBM), which allow only visible-hidden connections. Despite this simplification, training remains computationally expensive~\cite{bm_hard_simulate}.

Contrastive divergence (CD) is a widely used approximation method to accelerate training of RBMs~\cite{bm_cd}. However, CD provides only a rough estimate of the true gradients, often leading to unstable convergence and limiting the practical applicability of RBMs~\cite{carreira-perpinan_contrastive_2005}. While several improvements have been proposed~\cite{bengio_justifying_2009, salakhutdinov_efficient_2012}, efficient training of BMs and RBMs remains an open challenge in machine learning research.

Quantum computing offers opportunities to facilitate the training of BMs. Researchers have proposed multiple polynomial scaling quantum algorithms for Gibbs state preparation~\cite{chen2023quantumthermalstatepreparation}. Utilizing quantum hardware for Gibbs state preparation could not only improve the training process but also offer a sampling advantage for BMs. Consequently, quantum computing could significantly increase the practical relevance of BMs. Moreover, the ability to prepare a Gibbs state on quantum computers enables generalizing the Hamiltonian of BMs with non-commuting terms, potentially enhancing the model's representational power. A model defined by a non-commuting Hamiltonian, known as a quantum Boltzmann machine (QBM), provides a framework for engineering its Hamiltonian.

Quantum machine learning aims to enhance the capabilities of machine learning models with access to quantum computers. In the domain of generative modeling, the majority of proposals in the field have been based on parametrized quantum circuits such as quantum generative adversarial networks~\cite{qgan} or quantum circuit Born machines~\cite{benedetti_generative_2019}. Recent studies have revealed that these models encounter trainability issues, such as barren plateaus~\cite{mcclean_barren_2018, rudolph_trainability_2024}, rendering them impractical. Although certain QBM constructions are similarly affected by these limitations~\cite{ent-bp}, evidence suggests that alternative QBM formulations can successfully circumvent such issues~\cite{coopmans_sample_2024}.


Besides trainability issues, a major challenge for QBMs is the gradient estimation. The gradients of a QBM, which is defined by a generic non-commuting Hamiltonian, are known to be computationally intractable~\cite{amin_quantum_2018}. To overcome this challenge, various frameworks impose constraints on the Hamiltonian~\cite{amin_quantum_2018, anschuetz2019realizingquantumboltzmannmachines, kieferova_tomography_2017}. These approaches commonly avoid training non-commuting terms, treating them instead as hyperparameters. Although this constraint makes training cheaper, it inherently restricts the model's representational power. Consequently, the computational cost of QBM training is fundamentally tied to the choice of Hamiltonian, highlighting a trade-off between tractability and expressiveness.

In this work, we propose semi-quantum restricted Boltzmann machines (sqRBM) designed for efficient gradient computation while enabling the training of non-commuting terms. This is achieved by defining a Hamiltonian that is diagonal in the subspace of visible units, while containing non-commuting terms in the subspace of hidden units. In this way, it serves as an intermediate model between RBMs and quantum restricted Boltzmann machines (QRBM) such that $\QRBM \supseteq \sqRBM \supseteq \RBM$ as illustrated in Figure~\ref{fig:summary}. Being diagonal in the subspace of visible units allows sqRBM to provide a framework to explore the impact of non-commuting terms on learning from classical data. 

In order to investigate the practical importance of sqRBMs, we establish a direct correspondence between RBM and sqRBM. In particular, Theorem~\ref{theorem:equivalance} shows the expressivity equivalence between $\sqRBM_{n, m}$ and $\RBM_{n, 3m}$, where $n$ and $m$ are the number of visible and hidden units correspondingly. This implies that an sqRBM requires only one-third of the hidden units compared to an RBM to achieve the same representational power. Additionally, we present numerical results on various datasets and models up to 100 units, further validating our theoretical findings.

While expressive equivalence between an sqRBM and an RBM suggests similar learning performance on a given task, the sqRBM’s requirement for fewer hidden units may reduce the computational costs of training and sampling, given access to a fault-tolerant quantum computer. This opens new avenues in practical applications of QBMs. An illustration of the summary of main results is provided in Figure~\ref{fig:summary}. 

The remainder of this work is structured as follows: In Section~\ref{sec:background}, we provide the necessary definitions along with a brief overview of recent developments and approaches related to QBMs. In Section~\ref{sec:results}, we begin with the theoretical intuition behind sqRBMs, followed by a presentation of the analytical output probabilities and model gradients. It concludes with numerical results obtained by simulating RBMs and sqRBMs with up to 100 units. Finally, in Section~\ref{sec:discussion}, we discuss the implications of our results for the future of quantum machine learning.

\section{Background}
\label{sec:background}

\subsection{Model definitions}
A Boltzmann machine (BM) can be described by a Hamiltonian $H$ and its corresponding Gibbs state $\rho$. BMs consist of two types of units: visible and hidden. Visible units are the ones that are observed and used for input/output purposes, while hidden units form the latent dimension, giving the model its representation power. We denote the number of visible units with $n$ and the number of hidden units with $m$. 

The Hamiltonian that describes BMs can be written as a sum of three terms as follows:
\begin{equation}
    H = \Hv + \Hh + \Hint,
    \label{eq:hamiltonian}
\end{equation}
where $\Hv$ and $\Hh$ act on visible and hidden units respectively, while $\Hint$ represents the interaction between visible and hidden units. Consequently, the corresponding Gibbs state $\rho$ of $H$ is given as
\begin{equation}
    \rho = e^{- \beta H}/\mathcal{Z},\quad \mathcal{Z} = \Tr[e^{- \beta H}],
\label{eq:gibbs_state}   
\end{equation}
where $\mathcal{Z}$ is the partition function that ensures normalization ($\Tr[\rho]=1$) and $\beta$ is the inverse temperature, which we set as $\beta=1$ to simplify subsequent equations. In this work, we focus on restricted BM configurations, allowing interactions only between visible and hidden units. We denote the number of visible units with $n$ and the number of hidden units with $m$. A visualization of the restricted BM (RBM) configuration can be seen in Figure~\ref{fig:rbm-conn}.

\begin{figure}[!h]
    \centering
    \includegraphics[width=.9\linewidth]{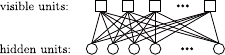}
    \caption{\textbf{Connectivity graph of restricted Boltzmann machines (RBM).} An RBM model has connections only between visible and hidden units. Lateral connections (e.g. visible to visible) are not permitted.}
    \label{fig:rbm-conn}
\end{figure}

The probability distribution of the model, denoted with $p$, can be obtained by marginalizing over hidden units, such that
\begin{equation}
    p_v = \Tr[\Lambda_v \rho], 
\label{eq:state_v}
\end{equation}
where $v \in \{0,1\}^n$ is a length $n$ bitstring and $\Lambda_v$ is a projective measurement with respect to the computational basis of the visible units given as
\begin{equation}
    \Lambda_{v} = \ket{v}\bra{v} \otimes \mathbb{1}_{2^m, 2^m}.
    \label{eq:projector-comp-basis}
\end{equation}

Of particular interest, let us define a Pauli string of length $n+m$ as the tensor product of operators from the set of Pauli matrices including the identity $\{ I, X, Y, Z \}$. A $k$-body operator acts on $k$ many qubits non-trivially, meaning it has $k$ many non-identity operators in the Pauli string representation. We write the $n+m$-qubit Pauli string of the Pauli-$Z$ operator acting on the $i$-th qubit as an example
\begin{equation}
\sigma_i^{Z} = \underbrace{I \otimes \cdots \otimes I }_{i-1} \otimes Z \otimes \underbrace{I \otimes \cdots \otimes I}_{n+m-i}.   
\end{equation}
With these definitions at hand, we provide a formal definition of a classical restricted Boltzmann machine (RBM) as follows:
\begin{definition}[Restricted Boltzmann machine (RBM)]
\label{def-rbm}
A restricted Boltzmann machine with $n$ visible and $m$ hidden units, denoted as $\RBM_{n,m}$, is described by a parameterized Hamiltonian according to Eq.~\eqref{eq:hamiltonian}. The three terms of the Hamilton are defined as follows:
\begin{align}
&\Hv = \sum_{i=1}^{n} a_{i}^{Z} \sigma_i^{Z}, \quad
\Hh = \sum_{j=1}^{m} b_{j}^{Z} \sigma_{n+j}^{Z},  \nonumber \\
&\Hint = \sum_{i=1}^{n} \sum_{j=1}^{m} w_{i, j}^{Z,Z} \sigma_{i}^{Z} \otimes \sigma_{n+j}^{Z},
\end{align}
where $\boldsymbol{a} \in \mathbb{R}^n$, $\boldsymbol{b} \in \mathbb{R}^m$ and $\boldsymbol{w} \in \mathbb{R}^{nm}$ are the parameter vectors of the model.
\end{definition}

The Hamiltonian of an $\RBM_{n,m}$ corresponds to a classical Ising model and contains only commuting operators ($\forall (i,j), \quad \left[ \sigma_i^{Z}, \sigma_{i}^{Z} \otimes \sigma_{n+j}^{Z} \right] = 0$). An $\RBM_{n,m}$ can be extended to a quantum model by incorporating non-commuting operators in the Hamiltonian. We refer to a generic model with a non-commuting Hamiltonian as a quantum restricted Boltzmann machine (QRBM).

\begin{definition}[Quantum restricted Boltzmann machine (QRBM)] \label{def-qrbm} 
A quantum restricted Boltzmann machine with $n$ visible and $m$ hidden units, denoted as $\QRBM_{n,m}$, is described by a parameterized Hamiltonian according to Eq.~\eqref{eq:hamiltonian}. The three terms of the Hamilton are defined as follows:
\begin{align}
&\Hv = \sum_{P \in \Wv}^{|\Wv|} \sum_{i=1}^{n} a_{i}^{P} \sigma_i^{P}, \quad
\Hh = \sum_{P \in \Wh}^{|\Wh|} \sum_{j=1}^{m} b_{j}^{P} \sigma_{n+j}^{P},  \nonumber \\
&\Hint = \sum_{(P,Q) \in \Wint}^{|\Wint|} \sum_{i=1}^{n} \sum_{j=1}^{m} w_{i, j}^{P, Q} \sigma_{i}^{P} \otimes \sigma_{n+j}^{Q},
\end{align}
where $\boldsymbol{a} \in \mathbb{R}^{|\Wv|n}$, $\boldsymbol{b} \in \mathbb{R}^{|\Wh|m}$ and $\boldsymbol{w} \in \mathbb{R}^{|\Wint|nm}$ are the parameter vectors of the model. $\Wv$, $\Wh$ and $\Wint$ are the sets of Pauli operators that describe the model and $|\mathcal{W}|$ denotes the cardinality of set $\mathcal{W}$.
\end{definition}

A generic QRBM contains all possible one- and two-body Pauli operators when $\mathcal{W}_{\mathrm{v}}=\mathcal{W}_{\mathrm{h}}=\{ X,Y,Z\}$ and $\mathcal{W}_{\mathrm{int}}=\mathcal{W}_{\mathrm{v}} \times \mathcal{W}_{\mathrm{h}}$. A common choice in the literature is the set that corresponds to the transverse field Ising model, such that  $\mathcal{W}_{\mathrm{v}}=\mathcal{W}_{\mathrm{h}}=\{X,Z\}$ and $\mathcal{W}_{\mathrm{int}}=\{ZZ\}$~\cite{amin_quantum_2018}. Notice that the definition of QRBM also includes the RBM when $\mathcal{W}_{\mathrm{v}}=\mathcal{W}_{\mathrm{h}}=\{Z\}$ and $\mathcal{W}_{\mathrm{int}}=\{ZZ\}$.

\subsection{Training Boltzmann machines}

Boltzmann machines can be trained by minimizing the negative log-likelihood, which is defined as
\begin{equation}
    \mathcal{L} = -\sum_v q_v \log{p_v},
\label{eq:loss_function}
\end{equation}
where $p$ and $q$ are the probability distributions of the model and the target, respectively ($\sum_v q_v = \sum_v p_v = 1$). Minimizing the negative log-likelihood is equivalent to minimizing $\KL$, which is given as
\begin{align}
    \KL (q || \, p) &= \sum_v q_v \log \left( \frac{q_v}{p_v} \right) \nonumber \\
    &= -\sum_v q_v \log p_v + \sum_v q_v \log q_v.
\end{align}
Then, gradients of both RBMs and QRBMs with respect to the negative log-likelihood can be obtained using a single formula that has two parts: positive phase and negative phase. The negative phase is obtained by measuring expectation values over the Gibbs state of the model, while the positive phase requires projective measurements on the visible unit subspace. Although these terms may look different in the classical machine learning literature, we keep the naming convention.

%
\begin{proposition}[Gradients of QRBM]\label{prop:qrbm-grads} A $\QRBM_{n,m}$ can be trained to minimize the negative log-likelihood with respect to the target probability distribution $q$ using the following gradient rule: 
\begin{equation}
    \partial_{\theta_i}{\mathcal{L}} = -\sum_v q_v \left( \underbrace{\frac{\Tr[\Lambda_v \partial_{\theta_i}{e^{-H}}]}{\Tr[\Lambda_v e^{-H}]}}_{\mathrm{\mathrm{positive~phase}}} - \underbrace{\frac{\Tr[\partial_{\theta_i}{e^{-H}}]}{\Tr[e^{-H}]}}_{\mathrm{\mathrm{negative~phase}}} \right),
    \label{eq:qrbm-grad}
\end{equation}
where $\theta_i \in \boldsymbol{\theta}$ is any real-valued parameter of the model, when the Hamiltonian terms are grouped such that $H = \sum_i \theta_i H_i$ and $\boldsymbol{\theta} \in \{\boldsymbol{a}, \boldsymbol{b}, \boldsymbol{w} \}$.
\end{proposition}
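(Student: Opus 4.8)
The plan is to obtain the gradient by direct logarithmic differentiation of the model probabilities $p_v$, exploiting linearity of the trace and the fact that the projectors $\Lambda_v$ carry no parameter dependence. First I would substitute the closed form $p_v = \Tr[\Lambda_v e^{-H}]/\Tr[e^{-H}]$, which follows from Eqs.~\eqref{eq:gibbs_state} and~\eqref{eq:state_v} at $\beta = 1$, into the negative log-likelihood of Eq.~\eqref{eq:loss_function}. Writing the logarithm of the quotient as a difference, $\log p_v = \log \Tr[\Lambda_v e^{-H}] - \log \Tr[e^{-H}]$, already isolates the two contributions that will become the positive and negative phases.

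Next I would differentiate each log-trace with respect to a single parameter $\theta_i$ using $\partial_{\theta_i} \log f = (\partial_{\theta_i} f)/f$. Because the trace is linear and $\Lambda_v$ is independent of $\boldsymbol{\theta}$, the derivative passes through both the trace and the projector, giving $\partial_{\theta_i} \Tr[\Lambda_v e^{-H}] = \Tr[\Lambda_v \partial_{\theta_i} e^{-H}]$ and likewise $\partial_{\theta_i} \Tr[e^{-H}] = \Tr[\partial_{\theta_i} e^{-H}]$. Assembling these yields $\partial_{\theta_i} \log p_v = \Tr[\Lambda_v \partial_{\theta_i} e^{-H}]/\Tr[\Lambda_v e^{-H}] - \Tr[\partial_{\theta_i} e^{-H}]/\Tr[e^{-H}]$; multiplying by $-q_v$ and summing over $v$ then reproduces the claimed rule, with the first quotient identified as the positive phase and the second as the negative phase.

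The step that warrants the most care — and the reason the formula holds uniformly for both the classical RBM and the non-commuting QRBM — is that the matrix exponential derivative $\partial_{\theta_i} e^{-H}$ is deliberately left unexpanded. Had one tried to write $\partial_{\theta_i} e^{-H} = -H_i e^{-H}$, this would be valid only when $[H_i, H] = 0$; for a generic non-commuting $H$ the correct expression requires Duhamel's formula, $\partial_{\theta_i} e^{-H} = -\int_0^1 e^{-sH} H_i\, e^{-(1-s)H}\, ds$, with $H_i = \partial_{\theta_i} H$ under the grouping $H = \sum_i \theta_i H_i$. By keeping the derivative symbolic, the proposition sidesteps this subtlety entirely, so the only ingredients actually needed are linearity of the trace and the parameter-independence of $\Lambda_v$. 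I therefore expect no genuine analytic obstacle in establishing the statement as written; the non-commutativity becomes relevant only later, when $\partial_{\theta_i} e^{-H}$ is evaluated explicitly to render the gradients computable.
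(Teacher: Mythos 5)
Your proposal is correct and takes essentially the same route as the paper's proof: substitute $p_v = \Tr[\Lambda_v e^{-H}]/\Tr[e^{-H}]$ into the negative log-likelihood, differentiate with respect to $\theta_i$ (you split the logarithm into a difference of two log-traces, the paper applies the quotient rule inside the chain rule --- algebraically equivalent steps), and pass the derivative through the trace while deliberately leaving $\partial_{\theta_i} e^{-H}$ unexpanded. Your closing observation that keeping the matrix-exponential derivative symbolic is what makes the formula valid for non-commuting $H$ (with Duhamel's formula needed only when one evaluates it explicitly) matches the paper's own handling of this subtlety, which it defers to the later expansion in terms of nested commutators.
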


The proof is provided in Appendix~\ref{app:proof-prop1}. In the case of a commuting Hamiltonian (e.g., RBM), the gradients of the negative log-likelihood with respect to the parameters take a fairly simple form. This follows the fact that if $[\partial_{\theta_i}H, H]=0$, then $\partial_{\theta_i}{e^{-H}} = e^{-H}(-H_i)$, where $H_i = \partial_{\theta_i}H$ and $H=\sum_i \theta_i H_i$. Then, the gradients can be computed by measuring the expectation values of the Hamiltonian terms on the Gibbs state of the model. Although this appears relatively straightforward, preparing the Gibbs state is still exponentially expensive with respect to the system size $n+m$ on a classical computer. For this reason, in practice, alternative methods are often employed to avoid this step. One of the most popular approaches is called contrastive divergence (CD)~\cite{bm_cd}.

In the case of a generic non-commuting Hamiltonian, the gradients become expensive to compute, which is mainly due to the derivative of the Hamiltonian not commuting with itself ($[\partial_{\theta_i}H, H] \neq 0$) and requiring $\partial_{\theta_i}{e^{-H}}$ to be explicitly computed. This makes computing the positive phase costly, even when the Gibbs state can be prepared efficiently. Therefore, a generic QRBM cannot be trained efficiently with the vanilla gradient descent approach.

Additionally, it has been shown that gradients of QRBMs with volume-law entanglement vanish exponentially in system size~\cite{ent-bp}. This means that the number of shots required to compute the expectation values increases exponentially with system size, introducing another layer of complication to utilize generic QRBM models. \looseness-1

\subsection{Existing approaches}

Computational overhead associated with gradients of non-commuting Hamiltonians has prompted the search for methods to mitigate this issue. The majority of proposed methods introduce bounds on the loss function to simplify the gradient expressions, while others modify or constrain the model definition. 

Ref.~\cite{amin_quantum_2018} has proposed an upper-bound for the negative log-likelihood. However, this approach constrains training of non-commuting terms, treating them as hyperparameters of the model. Alternatively, Refs.~\cite{amin_quantum_2018, kieferova_tomography_2017} have proposed optimizing a lower-bound of the objective function utilizing the Golden–Thompson inequality, enabling training of non-commuting terms only when the input includes non-diagonal elements. Ref.~\cite{wiebe_generative_2019} has introduced a variational upper-bound on the quantum relative entropy for QBMs restricted to commuting operators on the hidden units. Nevertheless, the expressive capacity of QBMs within the suggested framework remains an open question, particularly when applied to classical data. 

From a different perspective, Ref.~\cite{zoufal_variational_2021} has introduced a training algorithm based on the variational quantum imaginary time evolution. Although this approach enables training of generic QBMs, it encounters scalability issues of variational algorithms~\cite{mcclean_barren_2018}, thereby limiting their practical applications.

Last but not least, recent results have shown that fully-visible QBMs (models with no hidden units) can be trained sample-efficiently~\cite{Kappen_2020, coopmans_sample_2024}. Many studies have followed this result to show the capabilities of fully-visible QBMs on learning from classical and quantum data~\cite{patel2024quantumboltzmannmachinelearning, tuysuz2024learninggeneratehighdimensionaldistributions, patel2024naturalgradientparameterestimation, minervini2025evolved}. While fully-visible QBMs are more expressive than their classical counterparts, their lack of hidden units considerably limits their applicability to many practical tasks~\cite{montufar_expressive_2014}.

Existing proposals in the literature face limitations such as restricted training of non-commuting terms or reliance on non-diagonal density matrices, while classical probability distributions are represented by diagonal density matrices. These challenges hinder practical applications of QBMs for classical probability distributions. In the following section, we introduce the semi-quantum restricted Boltzmann machine, a novel framework designed for classical data that circumvents the aforementioned issues of QBMs.

\section{Results} \label{sec:results}

\subsection{Theoretical intuition}
Challenges in training generic QBMs stem from non-trivial estimation of $\partial_{\theta_i}{e^{-H}}$ and $\Tr[\Lambda_v \partial_{\theta_i}{e^{-H}}]$. In the case of a non-commuting Hamiltonian of the form $H=\sum_i \theta_i H_i$, one can write $\partial_{\theta_i}{e^{-H}}$ as
\begin{align}
&\partial_{\theta_i} e^{-H} = \nonumber \\
&e^{-H}\left(-H_i-\frac{1}{2}\left[ H, H_i \right] - \frac{1}{6}\left[ H,\left[ H, H_i \right]\right] + \cdots \right).
\end{align}
We provide the details of the derivation in Appendix~\ref{app:proof-exp-mat-der}. Then, we rewrite $\Tr[\Lambda_v \partial_{\theta_i}{e^{-H}}]$ by inserting the matrix exponential derivative. Exploiting the linearity of the trace, we obtain the following expression:
\begin{align}
\label{tr-positive_ph}
\Tr[\Lambda_v \partial_{\theta_i}{e^{-H}}] = &- \Tr[\Lambda_v e^{-H}H_i] \nonumber \\
&- \frac{1}{2}\Tr[\Lambda_v e^{-H}\left[ H, H_i \right]] \nonumber \\
&- \frac{1}{6}\Tr[\Lambda_v e^{-H}\left[ H,\left[ H, H_i \right]\right]] \nonumber \\
&+ \cdots 
\end{align}

Next, we observe that if $[\Lambda_v, H]=0$ (while noting that $[H_i,H] \neq 0$ still holds), the second term in Eq.~\eqref{tr-positive_ph} simplifies to:
\begin{align}
\label{tr-simplification}
\Tr[\Lambda_v e^{-H}\left[ H, H_i \right]] &= \Tr[\Lambda_v e^{-H}H H_i] - \Tr[\Lambda_v e^{-H} H_i H]  \nonumber \\
&= \Tr[H \Lambda_v e^{-H} H_i] - \Tr[\Lambda_v e^{-H} H_i H] \nonumber \\
&= \Tr[\Lambda_v e^{-H} H_i H] - \Tr[\Lambda_v e^{-H} H_i H] \nonumber \\
&= 0,
\end{align}
where we leverage the commutation rules and the cyclic property of the trace. This can be repeated for the rest of the terms in Eq.~\eqref{tr-positive_ph} that contain commutators to observe that only $\Tr[\Lambda_v e^{-H}H_i]$ results in a non-zero value. Similarly, notice that $\Tr[\partial_{\theta_i}{e^{-H}}] = -\Tr[{e^{-H}} H_i]$ for all models, independent of the Hamiltonian definition. Then, the computation of gradients in Eq.~\eqref{eq:qrbm-grad} can be simplified, although the Hamiltonian still contains non-commuting terms.

Recall that we set $[\Lambda_v, H]=0$ to obtain simplified gradients for non-commuting Hamiltonians. Let us rewrite $\Lambda_{v}$ from Eq.~\eqref{eq:projector-comp-basis} in the Pauli basis as
\begin{equation}
    \Lambda_{v} = \left(\,\bigotimes_{i=1}^n \frac{1}{2} (I + (-1)^{v_i}Z ) \right) \otimes I^{\otimes m}.
\end{equation}
Then, the costly gradient computation of QRBMs can be avoided by choosing a non-commuting Hamiltonian that commutes with $\sigma_i^Z$ only over the visible units. Specifically, this condition can be satisfied by defining $\Hv$ with $\Wv=\{Z\}$ (see Definition~\ref{def-qrbm}), while the remaining terms of the Hamiltonian can be defined with an arbitrary Pauli operator set. We define such a model as a semi-quantum restricted Boltzmann machine (sqRBM).

\subsection{Semi-quantum Boltzmann machines}

\begin{definition}[Semi-quantum RBM]
\label{def-sqrbm}
A semi-quantum restricted Boltzmann machine with $n$ visible and $m$ hidden units, denoted $\sqRBM_{n,m}$, is described by a parameterized Hamiltonian according to Eq.~\eqref{eq:hamiltonian}. The three terms of the Hamilton are defined as follows:
\begin{align}
&\Hv = \sum_{i=1}^{n} a_{i}^{Z} \sigma_i^{Z}, \quad \Hh = \sum_{P \in \Wh}^{|\Wh|} \sum_{j=1}^{m} b_{j}^{P} \sigma_{n+j}^{P},  \nonumber \\
&\Hint = \sum_{P \in \Wh}^{|\Wh|} \sum_{i=1}^{n} \sum_{j=1}^{m} w_{i, j}^{Z,P} \sigma_{i}^{Z} \otimes \sigma_{n+j}^{k},
\end{align}
where $\boldsymbol{a} \in \mathbb{R}^{n}$, $\boldsymbol{b} \in \mathbb{R}^{|\Wh| \cdot m}$ and $\boldsymbol{w} \in \mathbb{R}^{|\Wh| \cdot n \cdot m}$ are the parameter vectors of the model. $\Wh$ is a non-commuting set of one-qubit Pauli operators that non-trivially act only on the hidden units.
\end{definition}

Notice that if one chooses $\Wh=\{X\}$, $\Wh=\{Y\}$ or $\Wh=\{Z\}$, $H$ is a commuting Hamiltonian, and these models are equivalent to an RBM. There are three possible non-commuting choices for an sqRBM ($\Wh = \{X, Z\}$, $\Wh = \{Y, Z\}$ or $\Wh = \{X, Y, Z\}$). Pauli sets $\Wh = \{X, Z\}$ and  $\Wh = \{Y, Z\}$ result in two equivalent models that differ by a change of basis. We provide various models with their number of parameters in Table~\ref{tab:model-classif}.

\begin{table}[!h]
    \centering
    \caption{\textbf{Classification of various models.}}
    \begin{tabularx}{8cm}{XXX}
        Model & $\Wh$  & Parameters  \\
        \hline
        \hline
        $\RBM_{n,m}$ & $\{ Z\}$ & $n+m(n+1)$ \\
        $\sqRBM_{n,m}$ & $\{X, Z\}$ & $n+2m(n+1)$ \\
        $\sqRBM_{n,m}$ & $\{Y, Z\}$ & $n+2m(n+1)$ \\
        $\sqRBM_{n,m}$ & $\{X, Y, Z\}$ & $n+3m(n+1)$ \\
        \hline
    \end{tabularx}
    \label{tab:model-classif}
\end{table}

One can leverage the similarity of sqRBMs to RBMs in order to obtain a closed-form expression for the output probabilities as well as their gradients. For an sqRBM, contributions of $X$, $Y$ and $Z$ terms are equivalent. Let us denote the state\footnote{Not to be confused with a quantum state.} of the hidden units for a given visible unit configuration $v$ with $\phi_j^{P}(v)$ such that
\begin{equation}
    \phi_j^{P}(v) = b_{j}^{k} + \sum_{i=1}^{n} (-1)^{v_i} w_{i,j}^{Z,P},
    \label{eq:hidden_state}
\end{equation}
where $P \in \{X, Y, Z\}$. Then, we define the following vector that combines the states of all possible three Pauli operators:
\begin{equation}
    \Phi_j(v) = \begin{bmatrix}
        \phi_j^{X}(v) & \phi_j^{Y}(v) & \phi_j^{Z}(v)
    \end{bmatrix}.
\label{eq:hidden-state-vector}
\end{equation}
This leads to Proposition~\ref{prop:sqrbm-probs}, which describes the output probabilities of sqRBMs.

\begin{proposition}[Output probabilities of sqRBM]\label{prop:sqrbm-probs} The unnormalized output probabilities of an $\sqRBM_{n,m}$ are as follows:
\begin{equation}
\tilde{p}_v = \left( \prod_{i=1}^{n} e^{-(-1)^{v_i}a_{i}^{Z}} \right) \left(  \prod_{j=1}^{m} \cosh(||\Phi_j(v)||_2) \right),
\label{eq:sqrbm-probs}
\end{equation}

where $v \in \{0, 1\}^n$ and $\Phi_j(v)$ is the vector of hidden states as described in Eq.~\eqref{eq:hidden-state-vector}. Then, the normalized output probabilities are given as
\begin{equation}
    p_v = \tilde{p}_v / \sum_v \tilde{p}_v.
\end{equation}
\end{proposition}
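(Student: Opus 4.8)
The plan is to exploit the defining feature of an $\sqRBM$: the Hamiltonian touches the visible register only through the mutually commuting operators $\sigma_i^Z$, so $e^{-H}$ is block diagonal in the visible computational basis. First I would substitute the eigenvalue $(-1)^{v_i}$ for each $\sigma_i^Z$ to write $H = \sum_{v}\ketbra{v}{v}\otimes H_v$, where the effective hidden Hamiltonian conditioned on the visible configuration $v$ is
\begin{equation}
H_v = \Big(\sum_{i=1}^{n}(-1)^{v_i}a_i^Z\Big)\mathbb{1}
 + \sum_{j=1}^{m}\sum_{P\in\Wh}\phi_j^P(v)\,\sigma_{n+j}^P ,
\end{equation}
with $\phi_j^P(v)$ exactly as in Eq.~\eqref{eq:hidden_state}. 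Because $H$ is block diagonal, so is $e^{-H}=\sum_v\ketbra{v}{v}\otimes e^{-H_v}$, and the projector $\Lambda_v$ collapses the visible trace, giving directly $\tilde p_v = \Tr[\Lambda_v e^{-H}] = \Tr_{\mathrm h}[e^{-H_v}]$.

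Next I would separate the scalar visible contribution from the hidden operator. The first summand of $H_v$ is a multiple of the identity and factors out as the prefactor $\prod_i e^{-(-1)^{v_i}a_i^Z}$ appearing in Eq.~\eqref{eq:sqrbm-probs}. The residual operator $\sum_j\sum_P \phi_j^P(v)\,\sigma_{n+j}^P$ is a sum of terms acting non-trivially on distinct hidden qubits, hence mutually commuting, so its exponential factorizes across hidden units and $\Tr_{\mathrm h}[e^{-H_v}]$ reduces to a product over $j$ of independent single-qubit traces $\Tr_q[\exp(-\Phi_j(v)\cdot\vec\sigma)]$. Here I write $\Phi_j(v)\cdot\vec\sigma=\sum_{P\in\{X,Y,Z\}}\phi_j^P(v)\,\sigma^P$ as in Eq.~\eqref{eq:hidden-state-vector}, with the coefficients of any Pauli operators absent from $\Wh$ understood to vanish.

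The final step is the single-qubit evaluation. I would invoke the standard Pauli-exponential identity $\exp(-\vec r\cdot\vec\sigma)=\cosh(\|\vec r\|_2)\,I-\sinh(\|\vec r\|_2)\,(\hat r\cdot\vec\sigma)$, which follows from $(\vec r\cdot\vec\sigma)^2=\|\vec r\|_2^2\,I$. Taking the trace annihilates the traceless Pauli part and leaves $2\cosh(\|\Phi_j(v)\|_2)$ for each hidden qubit. Collecting the $m$ factors yields $\tilde p_v = 2^m\big(\prod_i e^{-(-1)^{v_i}a_i^Z}\big)\big(\prod_j \cosh(\|\Phi_j(v)\|_2)\big)$, which is Eq.~\eqref{eq:sqrbm-probs} up to the global constant $2^m$; this constant is immaterial for the unnormalized probabilities and cancels identically in $p_v=\tilde p_v/\sum_v \tilde p_v$.

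The computation is essentially mechanical once the right structure is in place, so the main obstacle is conceptual rather than technical: recognizing that restricting the visible operators to $\Wv=\{Z\}$ makes $H$ a direct sum over visible configurations, and that conditioning on $v$ converts the visible–hidden interaction into an effective single-qubit field $\Phi_j(v)$ on each hidden unit. Once this block-diagonal and factorized picture is established, the only points that demand care are the bookkeeping of which Pauli components survive in $\Wh$ (so that $\Phi_j(v)$ carries the correct entries) and the harmless overall factor $2^m$.
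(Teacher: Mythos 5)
Your proposal is correct, and it takes a genuinely different route from the paper. The paper's proof is bottom-up and computational: it writes out the $4\times 4$ Hamiltonian of the smallest instance $\sqRBM_{1,1}$ with $\Wh=\{X,Z\}$, diagonalizes it explicitly (with the aid of symbolic software) to obtain $\tilde p_0$ and $\tilde p_1$ as $\cosh$ expressions, and then generalizes the observed pattern --- the $(-1)^{v_i}$ signs and the identical functional role of the $X$ and $Z$ contributions --- to arbitrary $n$, $m$ and the full product formula. Your proof is top-down and structural: you note that $\Wv=\{Z\}$ forces $H$ to be block diagonal over visible configurations, $H=\sum_v \ket{v}\bra{v}\otimes H_v$, hence $\tilde p_v=\Tr[\Lambda_v e^{-H}]=\Tr_{\mathrm{h}}[e^{-H_v}]$; the identity part of $H_v$ peels off as the visible prefactor $\prod_i e^{-(-1)^{v_i}a_i^Z}$, the remaining hidden operator factorizes over qubits since the single-unit terms act on disjoint tensor factors, and the identity $\exp(-\vec r\cdot\vec\sigma)=\cosh(\norm{\vec r}_2)\,I-\sinh(\norm{\vec r}_2)\,\hat r\cdot\vec\sigma$ yields $2\cosh(\norm{\Phi_j(v)}_2)$ per hidden unit. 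What your approach buys is generality and rigor in one pass: it covers every $n$, $m$, and choice of $\Wh$ simultaneously, replacing the paper's pattern-generalization step (which is asserted rather than proved) with exact operator identities, and needing no computer algebra. What the paper's approach buys is concreteness: the explicit $\sqRBM_{1,1}$ matrix makes the origin of each term visible, which the authors then reuse when building intuition for Theorem~\ref{theorem:units}. Both derivations discard the same harmless overall constant ($2^m$ in your notation, the factor of two in the paper's Eq.~\eqref{eq:probs-sqrbm1-xz}), which cancels in the normalization, so the two arguments agree exactly on the stated result.
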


The proof is provided in Appendix~\ref{proof:prop:sqrbm-probs}. Next, we provide the gradients of sqRBMs following the result of Proposition~\ref{prop:sqrbm-probs}.

\begin{proposition}[Gradients of sqRBM]\label{prop:sqrbm-grads} An $\sqRBM_{n,m}$ with the output probability distribution $p$ can be trained to minimize the negative log-likelihood with respect to the target probability distribution $q$ using the following gradient rule:

Gradients of the parameters for the field terms acting on the visible units are given as 
\begin{equation}
\partial_{a_i}{\mathcal{L}} = \sum_v q_v \left( (-1)^{v_i} - \sum_v (-1)^{v_i} p_v \right).   
\end{equation}

Gradients of the parameters for the field terms acting on the hidden units are given as 
\begin{align}
\partial_{b_j^{P}}{\mathcal{L}} = &-\sum_v q_v \frac{\phi_j^P(v) }{||\Phi_j(v)||_2} \tanh \left( ||\Phi_j(v)||_2 \right) \nonumber \\
&+\sum_v q_v \sum_v \frac{\phi_j^P(v)}{||\Phi_j(v)||_2} \tanh \left( ||\Phi_j(v)||_2 \right)  p_v.    
\end{align}

Gradients of the parameters for the interaction terms acting on both visible and hidden units are given as 
\begin{align}
\partial_{w_{i,j}^{Z,P}}{\mathcal{L}} &= -\sum_v q_v  (-1)^{v_i}\frac{\phi_j^P(v) }{||\Phi_j(v)||_2} \tanh \left( ||\Phi_j(v)||_2 \right) \nonumber \\
&+ \sum_v q_v \sum_v (-1)^{v_i}\frac{\phi_j^P(v) }{||\Phi_j(v)||_2} \tanh \left( ||\Phi_j(v)||_2 \right)  p_v.
\end{align}
\end{proposition}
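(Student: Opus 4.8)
The plan is to work directly from the closed-form expression for the unnormalized probabilities in Proposition~\ref{prop:sqrbm-probs}, bypassing the general QRBM gradient formula of Proposition~\ref{prop:qrbm-grads} entirely. Writing $\mathcal{Z} = \sum_v \tilde{p}_v$ for the normalization, the normalized log-probability factorizes as $\log p_v = \log \tilde{p}_v - \log \mathcal{Z}$, so that $\mathcal{L} = -\sum_v q_v \log \tilde{p}_v + \log \mathcal{Z}$. Differentiating with respect to any parameter $\theta$ then splits into a data term $-\sum_v q_v\,\partial_\theta \log \tilde{p}_v$ and a model term $\partial_\theta \log \mathcal{Z}$. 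The key observation I would establish first is the log-partition identity $\partial_\theta \log \mathcal{Z} = \mathcal{Z}^{-1}\sum_v \partial_\theta \tilde{p}_v = \sum_v p_v\,\partial_\theta \log \tilde{p}_v$, obtained by pulling the derivative through the sum and using $\partial_\theta \tilde{p}_v = \tilde{p}_v\,\partial_\theta \log \tilde{p}_v$. This already shows that every gradient has the announced positive/negative-phase structure, with the \emph{same} function $\partial_\theta \log \tilde{p}_v$ averaged against $q$ and against $p$ respectively.

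What remains is to evaluate $\partial_\theta \log \tilde{p}_v$ for each parameter family, using $\log \tilde{p}_v = -\sum_i (-1)^{v_i} a_i^Z + \sum_j \log\cosh(||\Phi_j(v)||_2)$. For the visible biases $a_i$ only the first sum contributes, giving $\partial_{a_i}\log \tilde{p}_v = -(-1)^{v_i}$; inserting this into the two-term structure and using $\sum_v q_v = 1$ to absorb the model term reproduces the stated $\partial_{a_i}\mathcal{L}$. For the hidden-unit parameters the work is the chain-rule differentiation of $\log\cosh(||\Phi_j(v)||_2)$. I would record $\partial_x \log\cosh(x) = \tanh(x)$ together with $\partial_\theta ||\Phi_j(v)||_2 = ||\Phi_j(v)||_2^{-1}\sum_P \phi_j^P(v)\,\partial_\theta \phi_j^P(v)$. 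Since $\phi_j^P(v) = b_j^P + \sum_i (-1)^{v_i} w_{i,j}^{Z,P}$ depends on $b_{j'}^{P'}$ only through $\delta_{jj'}\delta_{PP'}$ and on $w_{i',j'}^{Z,P'}$ only through $\delta_{jj'}\delta_{PP'}(-1)^{v_{i'}}$, the norm derivatives collapse to a single summand, yielding $\partial_{b_j^P}\log \tilde{p}_v = \frac{\phi_j^P(v)}{||\Phi_j(v)||_2}\tanh(||\Phi_j(v)||_2)$ and $\partial_{w_{i,j}^{Z,P}}\log \tilde{p}_v = (-1)^{v_i}\frac{\phi_j^P(v)}{||\Phi_j(v)||_2}\tanh(||\Phi_j(v)||_2)$. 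Substituting into the positive/negative-phase template and again using $\sum_v q_v = 1$ produces exactly the claimed expressions.

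The only genuinely delicate step is differentiating the norm $||\Phi_j(v)||_2$: one must track the index selection carefully (which $\phi_j^P$ a given parameter actually touches) and handle the fact that the derivative is singular where $||\Phi_j(v)||_2 = 0$. I would note that this measure-zero configuration is harmless, since $\frac{\phi_j^P(v)}{||\Phi_j(v)||_2}\tanh(||\Phi_j(v)||_2)\to 0$ as $||\Phi_j(v)||_2\to 0$, so the gradient extends continuously. Everything else is bookkeeping: the factorized form of $\tilde{p}_v$ ensures each parameter touches exactly one factor, so no cross terms appear, and the data-minus-model skeleton is identical across all three parameter types. I expect no conceptual obstacle beyond faithfully applying the chain rule and the log-partition identity.
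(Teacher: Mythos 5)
Your proposal is correct and follows essentially the same route as the paper's own proof: both differentiate the closed-form unnormalized probabilities of Proposition~\ref{prop:sqrbm-probs}, split the loss via $\log p_v = \log \tilde{p}_v - \log \mathcal{Z}$, and apply the chain rule to $\cosh(||\Phi_j(v)||_2)$ to obtain the $\tanh$ factors, with your explicit log-partition identity $\partial_\theta \log \mathcal{Z} = \sum_v p_v\,\partial_\theta \log \tilde{p}_v$ appearing in the paper case by case rather than as a stated lemma. Your remark on the removable singularity at $||\Phi_j(v)||_2 = 0$ is a small, correct refinement the paper omits, but it does not change the argument.
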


The proof is provided in Appendix~\ref{proof:prop:sqrbm-grads}. Notice that both the gradients and output probabilities for RBM and sqRBM contain summations over all visible configurations, which grow exponentially in input size. This causes both of these models to be intractable for large input sizes on classical computers.

So far, we have focused on BM variants with restricted connectivity. Incorporating lateral connections within visible and hidden layers increases the computational cost of the training procedure~\cite{salakhutdinov_efficient_2012}. Although additional connectivity enhances the expressive power of BMs, computational complexity limits their practical applicability. However, training fully-connected QBMs without a significant computational overhead may be discovered within the quantum computing framework. Motivated by this opportunity, we introduce a fully-connected sqBM.  

\begin{definition}[Semi-quantum BM]
\label{def-sq-bm}
A semi-quantum Boltzmann machine (sqBM) with $n$ visible and $m$ hidden units is denoted $\sqBM_{n,m}$. An sqBM is a generalization of an sqRBM with lateral connections in the visible and hidden layers, which is described by a parameterized Hamiltonian according to Eq.~\eqref{eq:hamiltonian}. The three terms of the Hamilton are defined as follows:
\begin{align}
&\Hv = \sum_{i=1}^{n} \theta_{i}^{Z} \sigma_i^{Z} +  \sum_{i=1}^{n-1} \sum_{j=i+1}^{n} \theta_{i, j}^{Z,Z} \sigma_i^{Z} \otimes \sigma_j^{Z},  \nonumber \\
&\Hh = \sum_{k \in \Wh}^{|\Wh|} \left(\, \sum_{i=n+1}^{n+m} \theta_{i}^{k} \sigma_i^{k} \right. \nonumber \\  
& \qquad \qquad \qquad \left. + \sum_{l \in \Wh}^{|\Wh|} \sum_{i=n+1}^{n+m-1} \sum_{j=i+1}^{n+m} \theta_{i, j}^{k,l} \sigma_i^{k} \otimes \sigma_j^{l} \right),  \nonumber \\
&\Hint = \sum_{k \in \Wh}^{|\Wh|} \sum_{i=1}^{n} \sum_{j=n+1}^{n+m} \theta_{i, j}^{Z,k} \sigma_{i}^{Z} \otimes \sigma_{j}^{k},
\end{align}
where $\Wh$ is a non-commuting set of one-qubit Pauli operators that non-trivially act only on the hidden units.
\end{definition}

Ultimately, the results from Eq.~\eqref{tr-positive_ph} and Eq.~\eqref{tr-simplification} lead to a simplified expression for the gradients of a generic sqBM. Importantly, satisfying the condition $[\Lambda_v, H]=0$ does not impose restrictions on the connectivity between units. Then, one obtains the following closed-form expression for the gradients that is significantly cheaper to compute than Eq.~\eqref{eq:qrbm-grad}.

\begin{proposition}[Gradients of sqBM]\label{prop:sqbm-grads} An $\sqBM_{n,m}$ can be trained to minimize the negative log-likelihood with respect to the target probability distribution $q$ using the following gradient rule:
\begin{equation}
    \partial_{\theta_i}{\mathcal{L}} = -\sum_v q_v \left( \underbrace{-\frac{\Tr[\Lambda_v e^{-H}H_i]}{\Tr[\Lambda_v e^{-H}]}}_{\mathrm{\text{positive phase}}} + \underbrace{\frac{\Tr[e^{-H}H_i]}{\Tr[e^{-H}]}}_{\mathrm{\text{negative phase}}} \right),
    \label{eq:sqRBM-grad}
\end{equation}
where $\theta_i$ is any real-valued parameter of the model, when the Hamiltonian terms are grouped such that $H = \sum_i \theta_i H_i$.
\end{proposition}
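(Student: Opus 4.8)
The plan is to recognize that an $\sqBM_{n,m}$ is just a particular Boltzmann machine, so the generic gradient rule of Proposition~\ref{prop:qrbm-grads} applies verbatim: with $H=\sum_i\theta_i H_i$,
\begin{equation}
\partial_{\theta_i}\mathcal{L} = -\sum_v q_v\left(\frac{\Tr[\Lambda_v \partial_{\theta_i}e^{-H}]}{\Tr[\Lambda_v e^{-H}]} - \frac{\Tr[\partial_{\theta_i}e^{-H}]}{\Tr[e^{-H}]}\right).
\end{equation}
The whole task then reduces to simplifying the two matrix-exponential derivatives under the structure of the sqBM Hamiltonian. My target is to show that the positive phase collapses to $-\Tr[\Lambda_v e^{-H}H_i]$ and the negative phase to $-\Tr[e^{-H}H_i]$, which upon substitution yields exactly Eq.~\eqref{eq:sqRBM-grad}.

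First I would establish the single structural fact that does all the work, namely $[\Lambda_v, H]=0$. Writing $\Lambda_v$ in the Pauli basis as a product of factors $\tfrac12(I+(-1)^{v_i}Z)$ on the visible register tensored with identity on the hidden register, I would check term by term that $H$ acts on the visible units only through $\sigma_i^Z$: the field and lateral terms of $\Hv$ are built from $\sigma_i^Z$ and $\sigma_i^Z\otimes\sigma_j^Z$, the hidden terms $\Hh$ are supported entirely on the hidden register, and each interaction term in $\Hint$ carries a $\sigma_i^Z$ on the visible side. Since $[\,\tfrac12(I+(-1)^{v_i}Z),Z\,]=0$ and everything hidden commutes with the identity on the visible register, $\Lambda_v$ commutes with every summand of $H$, hence with $H$ itself. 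The point worth flagging here is that this argument never invokes restricted connectivity, so the lateral visible-visible and hidden-hidden couplings that distinguish the sqBM from the sqRBM are harmless.

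Next I would insert the expansion $\partial_{\theta_i}e^{-H}=e^{-H}\big(-H_i-\tfrac12[H,H_i]-\tfrac16[H,[H,H_i]]-\cdots\big)$ into the positive phase and use linearity of the trace, exactly as in Eq.~\eqref{tr-positive_ph}. The key step is to kill every commutator term: for an arbitrary operator $C$, the computation of Eq.~\eqref{tr-simplification} shows $\Tr[\Lambda_v e^{-H}[H,C]]=0$, because $[\Lambda_v,H]=0$ and $[e^{-H},H]=0$ let one slide $H$ past $\Lambda_v e^{-H}$ and the cyclic property of the trace then cancels the two resulting terms. Crucially, this cancellation uses only that $C$ sits inside a commutator with $H$, so it applies unchanged with $C=H_i$, $C=[H,H_i]$, and every higher nested commutator. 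Hence only the leading term survives and $\Tr[\Lambda_v\partial_{\theta_i}e^{-H}]=-\Tr[\Lambda_v e^{-H}H_i]$. For the negative phase I would use the simpler identity $\Tr[\partial_{\theta_i}e^{-H}]=-\Tr[e^{-H}H_i]$, which holds for any Hamiltonian since each commutator trace $\Tr[e^{-H}[H,\cdots]]$ vanishes by cyclicity alone.

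I expect the main obstacle to be the rigorous all-orders argument that every nested commutator term vanishes, rather than just the first one treated explicitly in Eq.~\eqref{tr-simplification}. The clean way to settle this is the observation above that the cancellation is completely insensitive to the identity of the inner operator, so a one-line induction, or simply quoting $\Tr[\Lambda_v e^{-H}[H,C]]=0$ for arbitrary $C$, disposes of the entire series at once. Substituting the two simplified phases back into the Proposition~\ref{prop:qrbm-grads} formula then reproduces Eq.~\eqref{eq:sqRBM-grad} directly, with the signs matching because the negative phase picks up an overall minus sign.
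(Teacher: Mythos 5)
Your proposal is correct and follows essentially the same route as the paper's own proof: apply the generic gradient rule of Proposition~\ref{prop:qrbm-grads}, expand $\partial_{\theta_i}e^{-H}$ as a commutator series, and use $[\Lambda_v,H]=0$ together with $[e^{-H},H]=0$ and cyclicity of the trace to annihilate every commutator term, leaving only $-\Tr[\Lambda_v e^{-H}H_i]$ and $-\Tr[e^{-H}H_i]$. If anything, you are slightly more complete than the paper in two spots it leaves implicit: you verify $[\Lambda_v,H]=0$ term by term from the sqBM Hamiltonian structure (the paper simply invokes it via Definition~\ref{def-sq-bm}), and you dispose of all higher nested commutators at once via the observation that $\Tr[\Lambda_v e^{-H}[H,C]]=0$ for arbitrary $C$, where the paper treats only the first commutator explicitly and asserts the rest follow similarly.
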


The proof is provided in Appendix~\ref{proof:prop:sqbm-grads}. Notice that Proposition~\ref{prop:sqbm-grads} provides the recipe to compute the gradients of either an sqRBM or the more general sqBM on a quantum computer. In this work, we study the relationship between RBMs and sqRBMs; therefore, in the remainder of the work, we will not consider sqBMs and leave their study as future work.

\subsection{Expressive equivalence}

Observe that output probabilities provided in Proposition~\ref{prop:sqrbm-probs} are similar whether one considers an RBM or an sqRBM. Naturally, this brings up several questions, such as: Is there a connection between these two models? Can we establish a mapping between them? Do the gradients of sqRBMs behave in the same way as those of RBMs? 

To help us answer some of these questions, let us consider an $\sqRBM_{n,1}$ with a single hidden unit. Let us set the parameters of the field terms that act on the visible units to zero only for simplicity. Then, following Eq.~\eqref{eq:sqrbm-probs}, the unnormalized output probability of this model can be written as 
\begin{equation}
\tilde{p}_v = \cosh(\sqrt{\phi_j^X(v)^2+\phi_j^Y(v)^2+\phi_j^Z(v)^2}).
\label{eq:sqrbmnn1-probs}
\end{equation}
Then, using Corollary~\ref{corollary:cosh} from Appendix~\ref{app:defs}, for small values of all parameters, the output probabilities can be written as
\begin{equation}
\tilde{p}_v \simeq \cosh(\phi_j^X(v)) \cosh(\phi_j^Y(v))\cosh(\phi_j^Z(v)),
\label{eq:sqrbmnn1-probs2}
\end{equation}
which is precisely the form that corresponds to the output probabilities of an $\RBM_{n,3}$ with three hidden units (see Eq.~\eqref{eq:sqrbm-probs}). In fact, these models also have the same number of parameters. Therefore, for small values of the parameters, there exists one-to-one mapping parameters, which give the same output probability distribution. For larger values of the parameters, the mapping of the parameters becomes non-trivial, but the expressivity of the models remains the same. We express this result more formally in Theorem~\ref{theorem:equivalance}.

\begin{theorem}[Equivalence of sqRBM hidden units to RBM multiple hidden units.] \label{theorem:units}
The single hidden unit of an $\sqRBM_{n,1}$ with $n$ visible units and operator set $\Wh$ is equivalent to $|\Wh|$ hidden units in an $\RBM_{n,|\Wh|}$. There exists a bijective mapping between the parameter spaces for small parameter values, while for larger values, the mapping becomes non-trivial and structurally dependent on $\Wh$.
\end{theorem}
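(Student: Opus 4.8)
The plan is to reduce the claim to a comparison of the two hidden-unit contributions to the output probability, and then to analyze the induced map on parameter space separately in the small- and large-parameter regimes. First I would specialize Proposition~\ref{prop:sqrbm-probs} to $m=1$. Taking $\Wh=\{X,Y,Z\}$ (the maximal case; the two-element sets follow by deleting one component), the unnormalized sqRBM probability factorizes as
\begin{equation}
\tilde{p}_v^{\sqRBM} = \left(\prod_{i=1}^n e^{-(-1)^{v_i}a_i^{Z}}\right)\cosh\!\left(\|\Phi_1(v)\|_2\right),
\end{equation}
whereas for $\RBM_{n,3}$ the same prefactor multiplies $\prod_{j=1}^{3}\cosh(\phi_j(v))$. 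Since the visible-field prefactor is identical and carries identical parameters $a_i^{Z}$ in both models, it suffices to compare the hidden contributions $g_{\mathrm{sq}}(v)=\cosh(\|\Phi_1(v)\|_2)$ and $g_{\mathrm{rbm}}(v)=\prod_{j=1}^{3}\cosh(\phi_j(v))$, where each $\phi_1^P(v)$ and each $\phi_j(v)$ is affine in the spins $(-1)^{v_i}$. I would then fix the natural identification $P\leftrightarrow j$, i.e.\ $b_1^P\mapsto b_j$ and $w_{i,1}^{Z,P}\mapsto w_{i,j}$, under which both models carry exactly $|\Wh|(n+1)$ hidden parameters, establishing the parameter-count match.

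For the small-parameter regime I would Taylor-expand both contributions and invoke Corollary~\ref{corollary:cosh}, which gives $\cosh(\sqrt{x^2+y^2+z^2})=\cosh(x)\cosh(y)\cosh(z)+O(\|\cdot\|^4)$. This yields $g_{\mathrm{sq}}(v)=g_{\mathrm{rbm}}(v)+O(\|\boldsymbol{\theta}\|^4)$ uniformly in $v$, so under the identity identification the two models produce the same function on the hypercube to leading order. To upgrade this to a genuine bijection I would study the map $\Psi$ sending each sqRBM parameter vector to the $\RBM_{n,3}$ parameter vector reproducing the same distribution: the leading-order coincidence shows $\Psi$ is well defined and smooth near the origin with Jacobian equal to the identity there, so the inverse function theorem makes $\Psi$ a local diffeomorphism, i.e.\ a bijection on a neighborhood of the origin.

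For the large-parameter regime I would exhibit the obstruction to a linear map explicitly: the quartic cross-terms of $\cosh(\sqrt{\,\cdot\,})$ carry coefficient $1/12$ while those of $\prod\cosh$ carry $1/4$, so the identity identification already fails at fourth order and any distribution-preserving reparametrization must be nonlinear and $\Wh$-dependent, as the statement asserts. The main obstacle is precisely this global claim. Because $g_{\mathrm{sq}}$ depends on $\Phi_1(v)$ only through its norm it is invariant under rotations of the $(\phi^{X},\phi^{Y},\phi^{Z})$ vector, whereas $g_{\mathrm{rbm}}$ is invariant only under axis permutations and sign flips; hence no single global linear map of parameters can intertwine the two, and the symmetry group that survives depends on $|\Wh|$. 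Establishing that the two images nevertheless coincide requires matching the $2^n$ vertex values with $O(n)$ parameters, so the argument must show that the two $|\Wh|(n+1)$-dimensional image varieties agree as subsets of the positive functions on the hypercube. I expect this to be the technically demanding step, to be handled either by constructing the explicit nonlinear, $\Wh$-dependent reparametrization beyond leading order, or by combining the shared tangent space at the origin with analyticity of both parametrizations to propagate the local equivalence to a global one.
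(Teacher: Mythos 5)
Your core reduction is the same as the paper's: specialize Proposition~\ref{prop:sqrbm-probs} to $m=1$, strip the common visible-field prefactor, apply Corollary~\ref{corollary:cosh} to write $\cosh(\|\Phi_1(v)\|_2)\approx\prod_{P\in\Wh}\cosh(\phi_1^P(v))$ for small parameters, and read off the identification $b_1^P\mapsto b_j$, $w_{i,1}^{Z,P}\mapsto w_{i,j}$ together with the matching count of $|\Wh|(n+1)$ hidden parameters; the paper likewise only remarks that for large parameters the correspondence requires solving a nonlinear system, and that the claim extends to any non-commuting $\Wh$. Your quartic bookkeeping (cross-terms with coefficient $1/12$ versus $1/4$) and the symmetry observation (invariance of $\cosh(\|\cdot\|_2)$ under rotations of $(\phi^X,\phi^Y,\phi^Z)$ versus only permutations and sign flips for the product of $\cosh$'s) are correct and actually sharpen the paper's unquantified phrase ``with corrections.''

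The one step you add beyond the paper is flawed: the inverse-function-theorem upgrade is circular. For your map $\Psi$ to be well defined you need, for each sqRBM parameter vector near the origin, an $\RBM_{n,|\Wh|}$ parameter vector reproducing \emph{exactly} the same distribution --- but that exact realizability is precisely the image-coincidence question you yourself defer as ``the technically demanding step.'' Leading-order agreement of the two Taylor expansions only gives an $O(\|\boldsymbol{\theta}\|^4)$ discrepancy between the two functions on the hypercube; it does not produce a point of the RBM image, so there is no map whose Jacobian you can compute, and the IFT has nothing to act on. Note that the paper never claims exact equality either: its proof constructs the identity-style identification yielding ``approximately the same output probability distribution'' for small parameters, and for large parameters asserts only qualitative expressive equivalence. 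So the defensible content of your argument --- and of the paper's --- is an approximate correspondence, not an exact bijection; if you drop the IFT paragraph and state the small-parameter claim as approximate, what remains matches the published proof.
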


The proof follows Eq.~\eqref{eq:sqrbmnn1-probs2}, and it is provided in Appendix~\ref{proof:theorem:equivalance}. Following Theorem~\ref{theorem:units}, one can generalize the statement to multiple hidden units for sqRBMs and obtain Theorem~\ref{theorem:equivalance}.

\begin{theorem}[Equivalence of sqRBM to RBM.] \label{theorem:equivalance}
The hidden units of an $\sqRBM_{n,m}$ with $n$ visible, $m$ hidden units and operator set $\Wh$ are equivalent to $|\Wh| \cdot m$  hidden units in an $\RBM_{n,|\Wh|\cdot m}$. There exists a bijective mapping between the parameter spaces for small parameter values, while for larger values, the mapping becomes non-trivial and structurally dependent on $\Wh$.
\end{theorem}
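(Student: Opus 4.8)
The plan is to reduce Theorem~\ref{theorem:equivalance} to the single-unit result of Theorem~\ref{theorem:units} by exploiting the multiplicative structure of the output probabilities in Proposition~\ref{prop:sqrbm-probs}. First I would write the unnormalized probability of the $\sqRBM_{n,m}$ as the product of the visible-field prefactor $\prod_{i=1}^{n} e^{-(-1)^{v_i} a_i^Z}$ and the hidden factor $\prod_{j=1}^{m} \cosh(\|\Phi_j(v)\|_2)$, and the unnormalized probability of the $\RBM_{n,|\Wh|\cdot m}$ as the same prefactor times $\prod_{j'=1}^{|\Wh|\cdot m} \cosh(\phi_{j'}^Z(v))$. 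The crucial observation is that each factor $\cosh(\|\Phi_j(v)\|_2)$ depends, through $\Phi_j(v)$ as defined in Eq.~\eqref{eq:hidden-state-vector}, only on the parameters $\{b_j^P, w_{i,j}^{Z,P}\}_{P\in\Wh,\, i}$ local to the $j$-th hidden unit, so the two products factorize over disjoint parameter blocks.

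Next I would relabel the $|\Wh|\cdot m$ RBM hidden units into $m$ consecutive blocks of size $|\Wh|$, so that the RBM hidden factor reads $\prod_{j=1}^{m} \prod_{P\in\Wh} \cosh(\phi_j^P(v))$. This matches the sqRBM product term by term: the $j$-th sqRBM factor $\cosh(\|\Phi_j(v)\|_2)$ is to be identified with the $j$-th RBM block $\prod_{P\in\Wh} \cosh(\phi_j^P(v))$. Theorem~\ref{theorem:units} is precisely the statement that these two single-block factors coincide under a bijective reparameterization in the small-parameter regime (via Corollary~\ref{corollary:cosh}) and represent the same family of functions of $v$ in general. Applying it independently to each of the $m$ blocks, and noting that the visible prefactor is identical in both models under the identity map on $\boldsymbol{a}$, assembles into a global correspondence: in the small-parameter regime, the direct product of the per-block bijections is itself a bijection between the full parameter spaces that yields identical unnormalized probabilities, hence, after dividing by the partition function, identical normalized distributions.

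The main obstacle is confirming that the per-unit maps compose into a genuine global bijection with no cross-coupling between blocks. This reduces to verifying that the parameters of distinct hidden units enter the probability only through their own factor, which is immediate from the form of $\Phi_j(v)$: since $\partial_{b_k^P}$ and $\partial_{w_{i,k}^{Z,P}}$ affect only the $k$-th factor, the Jacobian of the assembled map is block-diagonal and invertible precisely when each block is, so invertibility lifts directly from Theorem~\ref{theorem:units}. For larger parameter values the genuine difficulty is inherited entirely from the single-unit case: the exact factorization $\cosh(\|\Phi_j(v)\|_2) = \prod_{P\in\Wh} \cosh(\phi_j^P(v))$ fails, so \emph{equivalence} must be read as equality of the sets of representable distributions rather than a closed-form parameter identity. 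The generalization to $m>1$ contributes only the bookkeeping of the product, so once Theorem~\ref{theorem:units} is established, the multi-unit statement follows by this factorization argument without additional analytic work.
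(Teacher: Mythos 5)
Your proposal is correct and takes essentially the same route as the paper's own proof: it reduces the statement to Theorem~\ref{theorem:units} by using the factorization of the unnormalized probabilities in Proposition~\ref{prop:sqrbm-probs} over hidden units, grouping the RBM's $|\Wh|\cdot m$ hidden units into $m$ blocks of size $|\Wh|$ and applying the single-unit equivalence blockwise, with the parameter count matching exactly. The only difference is that you make explicit the block-diagonal (no cross-coupling) structure of the assembled parameter map, which the paper leaves implicit; this is added detail, not a different argument.
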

\begin{proof}
    The theorem follows Theorem~\ref{theorem:units}. For each hidden unit of an $\sqRBM$, one obtains $|\Wh|$-many $\RBM$ hidden units. According to Proposition~\ref{prop:sqrbm-probs}, hidden units are independently added as products of each other. Then, hidden units of an $\sqRBM_{n,m}$ with the operator pool $\Wh$ are equivalent to $|\Wh| \cdot m$ hidden units with the same arguments for the mapping of the parameters as these models have exactly the same number of parameters.
\end{proof}

The consequence of Theorem~\ref{theorem:equivalance} is that an $\sqRBM_{n,m}$ with the operator pool $\Wh$ can solve the tasks that an $\sqRBM_{n,|\Wh| \cdot m}$ can solve equally well with the same number of parameters. In other words, these models have the same expressivity. In the next section, we support Theorem~\ref{theorem:equivalance} with numerical simulations.

\subsection{Numerical results}\label{sec:numerics}

\begin{figure*}[!t]
    \centering
    \includegraphics[width=\linewidth]{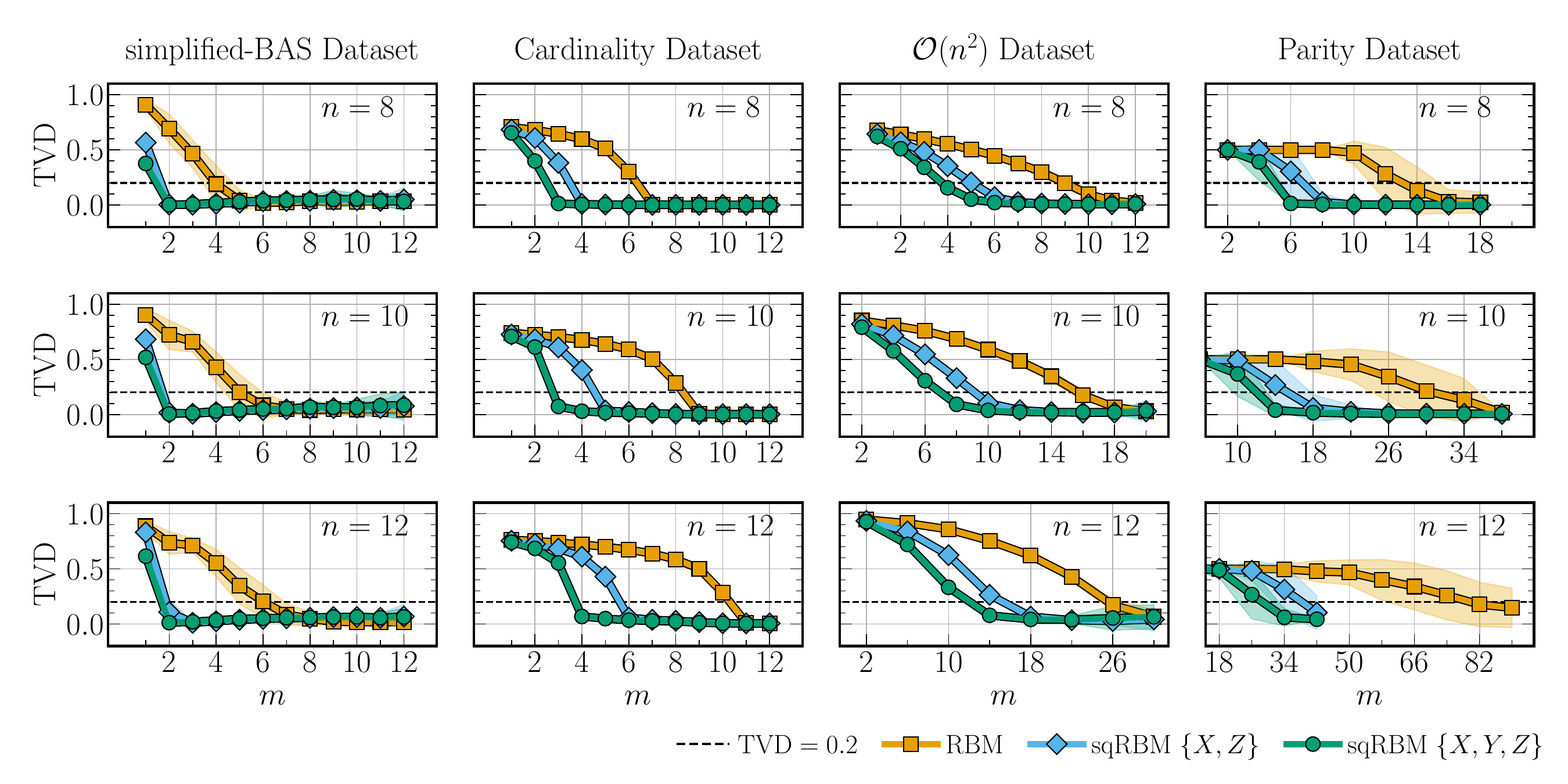}
    \caption{\textbf{Training results.} We train three models ($\RBM$, $\sqRBM \{X, Z\}$ and $\sqRBM \{X, Y, Z\}$) over four datasets with three different input sizes ($n \in \{8,10,12\}$) and various number of hidden units in the range $m \in [1,90]$. We report the total variation distance (TVD) measured after training all models 100 times with different initial parameters. The solid lines report the average, while the shades indicate the standard deviation. Each column reports results for a different dataset, ordered in increasing difficulty from left to right. The target probability distribution for $\mathcal{O}(n^2)$ dataset is varied for each run, using the same 100 seed for all models. The same target probability distribution is used for the other datasets in all runs. \looseness-1}
    \label{fig:training-results}
\end{figure*}

In this section we provide numerical results to support our theoretical findings. The code and the generated data to reproduce the results can be accessed online~\cite{code, data}. We perform exact numerical simulations to train various RBM and sqRBM models. We use four distinct datasets, which are defined as follows: 
\begin{itemize}
    \setlength{\itemsep}{0pt} 
    \setlength{\parindent}{0pt} 
    \setlength{\leftskip}{0pt} 
    \renewcommand{\labelitemi}{} 
    \item \textbf{simplified-BAS dataset:} $n$-bit uniform probability distribution over the bitstrings that correspond to vertical or horizontal lines on a $2 \times n/2$ grid.
    \item \textbf{$\boldsymbol{\mathcal{O}}{\mathbf{(\textit{n}^2)}}$ dataset:} $n$-bit uniform probability distribution over randomly chosen $n^2$ bitstrings.
    \item \textbf{Cardinality dataset:} $n$-bit uniform probability distribution over the bitstrings that have $n/2$ cardinality.
    \item \textbf{Parity dataset:} $n$-bit uniform probability distribution over the bitstrings that have even parity.
\end{itemize}

We train all models 100 times with parameters randomly initialized from a uniform distribution between $[-1,1]$ using the \textsc{AMSGrad} optimizer~\cite{amsgrad} with the hyperparameters $\{\mathrm{lr}=0.1, \beta_1=0.9, \beta_2=0.999 \}$ and $\KL$ as the loss function. We have not performed hyperparameter optimization as all models converge within a reasonable number of iterations. We emphasize that all results can be improved with dedicated hyperparameter optimization. Our goal in this work is not to find the best performing model but to treat all models the same way.

We employ an $\RBM$ and two quantum models: $\sqRBM \{X, Z\}$ and $\sqRBM \{X, Y, Z\}$. These quantum models are defined by two sets of operators for the hidden units, $\Wh=\{X,Z\}$ and $\Wh=\{X,Y,Z\}$, respectively. Since we minimize $\KL$, we report values of another metric, namely total variation distance ($\TVD(q || p) = \frac{1}{2}||p - q||_1$) for both RBM and sqRBM models across the described datasets in Figure~\ref{fig:training-results}. As a reference point for practical purposes, we report the $\TVD=0.2$ line. It can be seen that all models can go below the $\TVD=0.2$ line given that they have sufficient number of hidden units. 

In general, the number of hidden units required to achieve a good approximation depends on the dataset. More specifically, the support of a probability distribution is a good measure of difficulty. In Figure~\ref{fig:training-results}, the datasets are ordered in increasing difficulty from left to right. There, we observe that one needs a larger value of $m$ for all models as the support of the dataset increases. In practice, it is considered that a dataset has $\mathrm{supp}(p) \in \mathcal{O}(\mathrm{poly}(n))$, therefore, it is expected that $m \in \mathcal{O}(\mathrm{poly}(n))$ suffices to learn a distribution with good precision. 

Our findings across the considered datasets indicate that $\sqRBM \{X, Y, Z\}$ requires fewer hidden units to reach the reference point compared to $\sqRBM \{X, Z\}$. Similarly, $\sqRBM \{X, Z\}$ requires fewer hidden units to reach the reference point compared to $\RBM$. However, all models are able to reach the same low values of $\TVD$, provided they have a sufficient number of hidden units. Theorem~\ref{theorem:equivalance} predicts $\sqRBM_{n, m} \{X, Z\}$ to perform equally as well as $\RBM_{n, 2m}$ (with $2m$ hidden units in the RBM) and similarly for $\sqRBM_{n, m} \{X, Y, Z\}$ to perform equally as well as $\RBM_{n, 3m}$. We emphasize that the models have the same number of parameters for these settings. We numerically verify Theorem~\ref{theorem:equivalance} by computing the ratios of the number of hidden units that are sufficient for each model to achieve $\TVD < 0.2$ in Figure~\ref{fig:threshold-results}. As expected, the minimum number of hidden units $m$, that is sufficient to go below the threshold, varies with dataset and input size. Overall, results align with the prediction of Theorem~\ref{theorem:equivalance} across four datasets and varying input sizes.

\begin{figure*}[!t]
    \centering
    \includegraphics[width=\linewidth]{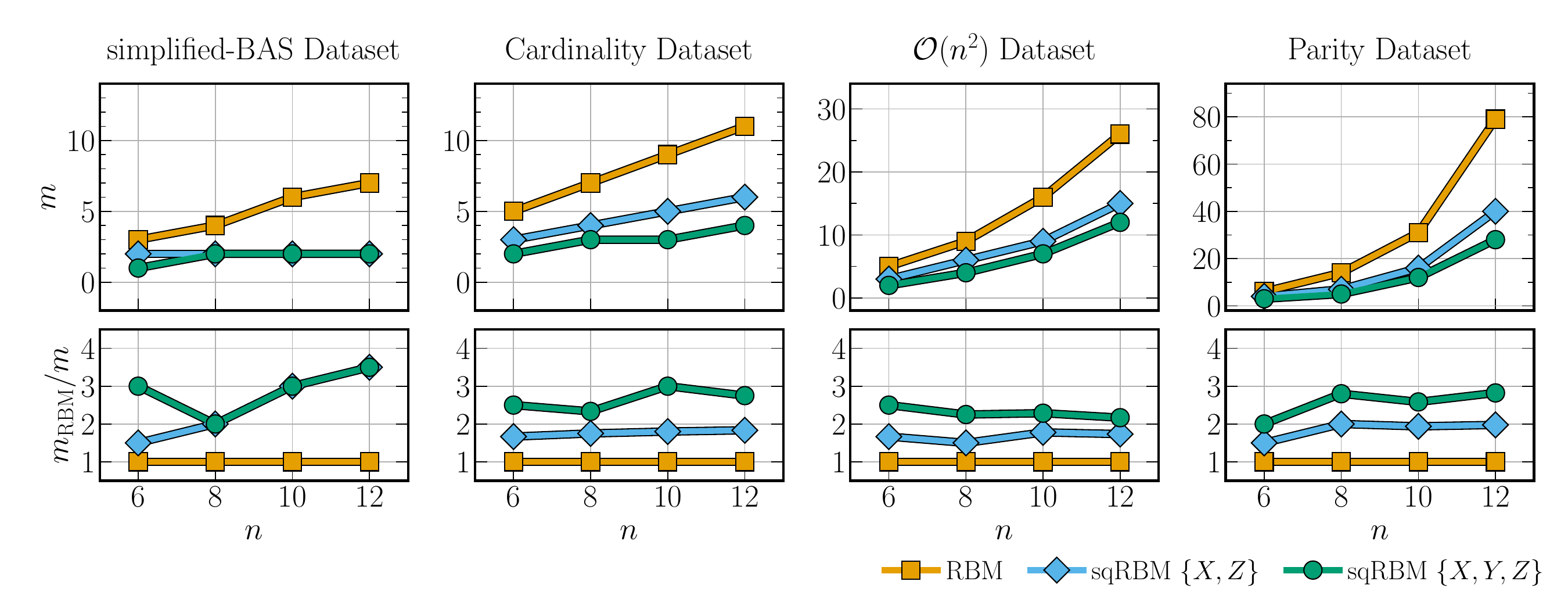}
    \caption{\textbf{Minimum number of hidden units required to learn target probability distributions on average.} We report the minimum number of hidden units $m$ required to achieve $\mathrm{TVD} < 0.2$ on average, over four datasets for various input sizes ($n \in \{6,8,10,12\}$) using three models ($\RBM$, $\sqRBM \{X, Z\}$ and $\sqRBM \{X, Y, Z\}$) as in Figure~\ref{fig:training-results}. In the bottom panel we provide the ratio of $m_\mathrm{RBM}$ to $m$ of the other models. The difficulty of each dataset results in a different scaling behavior. Recall that $\RBM$ has the same number of parameters and expressivity according to Theorem~\ref{theorem:equivalance} as $\sqRBM \{X, Z\}$ for the ratio $m_\mathrm{RBM} = 2m$ and similarly $\sqRBM \{X, Y, Z\}$ for $m_\mathrm{RBM} = 3m$. \looseness-1}
    \label{fig:threshold-results}
\end{figure*}

\section{Discussion} \label{sec:discussion}

In this work, we introduce semi-quantum restricted Boltzmann machines (sqRBM), a subclass of quantum Boltzmann machines (QBM). The sqRBM Hamiltonian acts with arbitrary non-commuting Pauli operators on hidden units, whereas operators for visible units are restricted to a commuting set. The structure provided by the Hamiltonian allows us to circumvent the expensive computation of the positive phase in Proposition~\ref{prop:qrbm-grads} and enables us to derive the analytical output probabilities and gradients in Proposition~\ref{prop:sqrbm-probs} and Proposition~\ref{prop:sqrbm-grads}.

In the field of quantum machine learning, efficient evaluation of gradients remains a major challenge. Expressive models based on parameterized quantum circuits exhibit trainability issues, often linked to barren plateaus of the loss landscape~\cite{mcclean_barren_2018}. Barren plateaus lead to an exponential increase in the number of samples required to evaluate gradients as the system size grows. This significantly limits the scalability of circuit-based quantum machine learning models, making it difficult to demonstrate their practical viability.

QBMs offer a promising alternative to circuit-based approaches. However, when the entanglement entropy between visible and hidden units obeys the volume law, generic QRBM models are susceptible to barren plateaus~\cite{ent-bp}. Importantly, sqRBMs do not have entanglement between visible and hidden units, therefore, this problem is naturally mitigated, and the gradients can be estimated with polynomially many samples from the Gibbs state. Moreover, the similarity between RBM and sqRBM gradients also suggests the absence of vanishing gradients. For completeness, we provide numerical results that confirm this conjecture in Appendix~\ref{app:numerics-grads}.

Expressivity of QBMs is directly related to their Hamiltonian. A popular choice in the literature is to employ the transverse field Ising model (TFIM)~\cite{amin_quantum_2018} due to its similarity to the classical Ising model. However, QBMs based on TFIM have been reported to exhibit only marginal improvement in learning capacity compared to BMs, while generic Hamiltonians demonstrate significant improvement~\cite{tuysuz2024learninggeneratehighdimensionaldistributions}. Our results help explain the poor performance of such models and provide a strategy to build more expressive ones. 

Let us consider a QRBM based on the TFIM Hamiltonian. Such a model is defined by adding transverse $X$ fields to visible and hidden units of the RBM Hamiltonian. In our notation, this can be expressed as $\Wv=\Wh=\{X,Z\}$ and $\Wint=\{ZZ\}$. Recall that the state of the $j$-th hidden unit is described with $||\Phi_j(v)||_2$ (see Eq.~\eqref{eq:sqrbm-probs}). Then, for a TFIM Hamiltonian, one obtains {\tiny $\sqrt{\phi_j^Z(v)^2 + (b_j^X)^2}$}, where $b_j^X$ is the parameter of the transverse $X$ field of the TFIM Hamiltonian on the $j$-th hidden unit. Then, it follows that the contribution of the transverse field is independent of the visible unit configuration $v$ and can not significantly change the model's expressivity. Hence, the output probability distributions of such QBMs closely resemble those of classical models. One could improve the expressivity of such QBM models by including $ZX$ terms in the Hamiltonian to have $\Wv=\Wh=\{X,Z\}$ and $\Wint=\{ZZ,ZX\}$. Note that such a model is a QRBM and not an sqRBM. Therefore, its gradients are not efficiently computable.

The practical feasibility of machine learning models depends both on their learning capacity and the computational resources required for training and sampling. The computational cost of training an $\RBM_{n,m}$ using exact analytical gradients, as well as sampling from a trained model, scales as $\mathcal{O}(\exp(n+m))$ on a classical computer~\cite{kappen_efficient}. However, training costs can be significantly reduced by leveraging classical techniques such as contrastive divergence (CD)~\cite{bm_cd}. Analogously, given the similarity in the output probability expressions of sqRBM and RBM, a CD-based training algorithm for sqRBM may be developed as well. We leave the study of CD algorithms to train sqRBMs as future work.


While the training cost of RBMs, and potentially sqRBMs, can be mitigated, sampling from both models remains a computationally demanding task. Alternatives to facilitate approximate or exact Gibbs state preparation, including quantum algorithms, may be employed to further decrease the computational cost~\cite{mean_field_kappen, salakhutdinov_efficient_2012, chen2023quantumthermalstatepreparation}. There are multiple proposals of quantum algorithms that offer Gibbs state preparation with $\mathcal{O}(\mathrm{poly}(n,m))$ cost~\cite{chen2023quantumthermalstatepreparation} Therefore, access to fault-tolerant quantum computers could provide a sampling advantage for both RBMs and sqRBMs.

Compared to RBMs, sqRBMs require fewer qubits on a quantum computer to perform the same task. Therefore, sqRBM presents a promising direction for quantum machine learning, demonstrating that the benefits of quantum models can originate from reduced resource requirements rather than speedups. By requiring fewer quantum resources, sqRBMs could improve feasibility for early quantum machine learning applications, particularly in generative modeling and feature selection.

One of the goals of quantum machine learning is to process both classical and \textit{quantum} data efficiently. Although sqRBMs are quantum models, they only support classical data as input because their Hamiltonian is commuting within the subspace of visible units. Notably, there are proposals in the literature for QBMs that are suitable for quantum data~\cite{wiebe_generative_2019}.  

In future work, adding lateral connections between hidden units could be utilized to further enhance the expressivity of sqRBMs. To this end, we introduce the definition~\ref{def-sq-bm} of sqBM that has additional connectivity within visible and hidden units. Note that the gradient expression for sqBM is the same as for sqRBM. However, increased connectivity within hidden units may result in entanglement-induced barren plateaus~\cite{ent-bp}. Another promising extension to improve the expressivity of sqRBMs is to incorporate additional hidden layers, similar to deep Boltzmann machines~\cite{salakhutdinov_deep_2009}.

\begin{acknowledgments}
C.T. is supported in part by the Helmholtz Association -``Innopool Project Variational Quantum Computer Simulations (VQCS)''. This work is supported with funds from the Ministry of Science, Research and Culture of the State of Brandenburg within the Centre for Quantum Technologies and Applications (CQTA). This work is funded within the framework of QUEST by the European Union’s Horizon Europe Framework Programme (HORIZON) under the ERA Chair scheme with grant agreement No.\ 101087126. Parts of this research have been funded by the Federal Ministry of Education and Research of Germany and the state of North-Rhine Westphalia as part of the Lamarr-Institute for Machine Learning and Artificial Intelligence. M.G. is supported by CERN through the CERN Quantum Technology Initiative.
\begin{figure}[!h]
    \centering
    \includegraphics[width=0.1\textwidth]{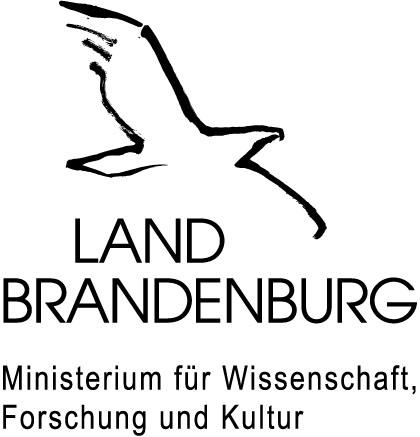}
\end{figure}
\end{acknowledgments}

\bibliography{main}

\onecolumngrid
\appendix
\section*{Appendix}

\makeatletter
\renewcommand\thesubsection{\thesection\arabic{subsection}}
\renewcommand\p@subsection{}
\makeatother

\section{Nomenclature}

\begin{table}[!h]
    \centering
    \renewcommand{\arraystretch}{1.5} 
    \begin{tabularx}{\linewidth}{|C{2cm}|X|}
    \hline
    $a_i^k$ & Real-valued parameter of the field term acting on the $i$-th visible unit with Pauli-$k$ or Pauli-$Z$ if $k$ is not specified.\\
    \hline
    $\boldsymbol{a}$ & Vector of all real-valued parameters of the field terms acting on visible units. \\
    \hline 
    $b_j^k$ & Parameter of the field term acting on the $j$-th hidden unit with Pauli-$k$ \\
    \hline
    $\boldsymbol{b}$ & Vector of all real-valued parameters of the field terms acting on hidden units. \\
    \hline
    $w_{i,j}^{k,l}$ & Parameter of the interaction term acting on the $i$-th visible and $j$-th hidden unit with Pauli $k \otimes l$. \\
    \hline 
    $\boldsymbol{w}$ & Vector of all real-valued parameters of the interaction terms acting on visible and hidden units.\\
    \hline 
    $\theta_i$ & The real-valued parameter of the Hamiltonian term $H_i$, when the Hamiltonian terms are grouped as $H = \sum_i \theta_i H_i$.\\
    \hline 
    $\boldsymbol{\theta}$ & The vector of all real-valued parameters of the Hamiltonian, when the Hamiltonian terms are grouped as $H = \sum_i \theta_i H_i$.\\
    \hline
    $\Hv$ & Hamiltonian that acts non-trivially on the visible units. \\
    \hline
    $\Hh$ & Hamiltonian that acts non-trivially on the hidden units. \\
    \hline
    $\Hint$ & Interaction Hamiltonian that acts non-trivially on visible and hidden units at the same time. \\
    \hline
    $\mathcal{W}$ & Set of Pauli operators. \\
    \hline
    $\KL$ & Kullback-Leibler divergence. \\
    \hline
    $\mathcal{L}$ & Loss function. \\
    \hline
    $\mathrm{TVD}$ & Total variation distance. \\
    \hline
    $\phi_j^P(v)$ & The state of the $j$-th hidden unit for operator $P$ for the given visible configuration $v$. \\
    \hline
    $\Phi_j(v)$ & The state vector of the $j$-th hidden unit for all operators for a given visible configuration $v$. \\
    \hline
    \end{tabularx}
    \label{tab:nomenclature}
\end{table}

\newpage
\section{Proofs}

\subsection{Useful definitions}\label{app:defs}

In this work we refer to Pauli operators with two different notations to improve readability. Below, we provide the matrices that correspond to these operators for completeness.

\begin{equation}
    I = \begin{bmatrix}
        1 & 0 \\
        0 & 1
    \end{bmatrix}, \quad 
    X=\sigma^X = \begin{bmatrix}
        0 & 1 \\
        1 & 0
    \end{bmatrix}, \quad
    Y=\sigma^Y = \begin{bmatrix}
        0 & -i \\
        i & 0
    \end{bmatrix}, \quad
    Z=\sigma^Z = \begin{bmatrix}
        1 & 0 \\
        0 & -1
    \end{bmatrix}.
\end{equation}

\begin{lemma}\label{lemma:cosh} Let $a$, $b$ and $c$ be real-valued scalars. For sufficiently small values of $b$ and $c$ such that $(bc)^2 \ll 6$, the equation $\cosh (a) = \cosh(b)\cosh(c)$, admits the approximate solution $a^2 \approx b^2 + c^2$.
\end{lemma}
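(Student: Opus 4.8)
The plan is to linearize the multiplicative relation by taking logarithms and then compare Taylor expansions order by order in the small quantities $b$ and $c$. Writing the hypothesis as $\log\cosh(a) = \log\cosh(b) + \log\cosh(c)$ converts the product on the right-hand side into a sum, which is far more convenient for series matching. The key input is the even power series
\begin{equation}
\log\cosh(x) = \tfrac{1}{2}x^{2} - \tfrac{1}{12}x^{4} + O(x^{6}),
\end{equation}
obtained by composing $\cosh(x) = 1 + \tfrac{1}{2}x^{2} + \tfrac{1}{24}x^{4} + \cdots$ with $\log(1+u)$. Since $\log\cosh$ is even and vanishes to second order at the origin, both sides are controlled by their quadratic and quartic terms whenever $b,c$ are small.

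Next I would insert the ansatz $a^{2} = b^{2} + c^{2}$ and verify it consistently. At quadratic order the equation reads $\tfrac{1}{2}a^{2} = \tfrac{1}{2}(b^{2}+c^{2})$, which is solved exactly by the ansatz and fixes the leading behavior. The content of the lemma then lies at quartic order: substituting $a^{2}=b^{2}+c^{2}$ into $-\tfrac{1}{12}a^{4}$ produces $-\tfrac{1}{12}(b^{4} + 2b^{2}c^{2} + c^{4})$, whereas the right-hand side contributes only $-\tfrac{1}{12}(b^{4}+c^{4})$. The two expressions therefore disagree by the cross term
\begin{equation}
\log\cosh(a) - \log\cosh(b) - \log\cosh(c) = -\tfrac{1}{6}\,b^{2}c^{2} + O(b^{6},c^{6}),
\end{equation}
so the ansatz $a^{2}\approx b^{2}+c^{2}$ fails to solve the equation exactly precisely by an amount of order $b^{2}c^{2}$. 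This is exactly where the hypothesis enters: the stated condition $(bc)^{2}\ll 6$ is equivalent to $\tfrac{1}{6}b^{2}c^{2}\ll 1$, which makes this residual negligible and validates $a^{2}\approx b^{2}+c^{2}$ to the claimed accuracy.

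The main obstacle, and the only step requiring genuine care, is making the word \emph{approximate} quantitatively precise: I must confirm that the $-\tfrac{1}{6}b^{2}c^{2}$ term is indeed the dominant mismatch and that the neglected $O(b^{6},c^{6})$ tails do not contribute at the same order. This amounts to bounding the remainder of the $\log\cosh$ series uniformly for the relevant range of arguments, which is routine given that $\log\cosh$ is entire with a geometrically convergent expansion, but it is what pins down the regime in which the approximation is meaningful. I would close by noting that the same cross-term obstruction explains why the exact relation $\cosh(\lVert\Phi_j\rVert_2)$ in Proposition~\ref{prop:sqrbm-probs} only factorizes into a product of $\cosh$ factors in the small-parameter limit, which is the form needed in Eq.~\eqref{eq:sqrbmnn1-probs2} to identify the single sqRBM hidden unit with three RBM hidden units.
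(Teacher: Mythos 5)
Your proof is correct, but it takes a genuinely different route from the paper's. The paper works multiplicatively: it multiplies the Taylor series of $\cosh(b)$ and $\cosh(c)$ directly, drops the cross term $b^2c^2/(2!\,2!)$ and higher-order terms under the stated hypothesis, and then pattern-matches the surviving series against the expansion of $\cosh\bigl(\sqrt{b^2+c^2}\bigr)$. You instead linearize via $\log\cosh$, turning the product into a sum, and verify the ansatz $a^2 = b^2+c^2$ order by order, isolating the leading obstruction as the single cross term $-\tfrac{1}{6}b^2c^2$. Your route buys something the paper's proof does not make explicit: it identifies the exact coefficient of the leading mismatch, which is precisely what explains the otherwise unmotivated constant in the hypothesis $(bc)^2 \ll 6$ (the paper's proof neglects a term with coefficient $\tfrac14$ and never reconciles it with the $6$; the true residual after accounting for the quartic term of $\cosh\bigl(\sqrt{b^2+c^2}\bigr)$ is $\tfrac{1}{6}b^2c^2$, as your computation shows). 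The paper's route, on the other hand, is more economical for its downstream use: it directly produces the product-of-$\cosh$ form needed in Eq.~\eqref{eq:sqrbmnn1-probs2} and Corollary~\ref{corollary:cosh} without passing through logarithms. One caveat applies equally to both arguments: strictly speaking, "negligible" should mean small relative to the retained quadratic terms $\tfrac12(b^2+c^2)$ rather than relative to $1$, but since the lemma is stated only as an approximate solution in a small-parameter regime, your treatment is at the same level of rigor as the paper's, and your closing remark about bounding the series remainder is the honest way to tighten either version.
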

\begin{proof} The Taylor series expansion of $\cosh(x)$ for real-valued values of $x$ is given by
\begin{equation}
    \cosh(x) = 1 + \frac{x^2}{2!} + \frac{x^4}{4!} + \mathcal{O}(x^6)
\end{equation}
Substituting this expansion into the product $\cosh(b) \cosh(c)$, we obtain
\begin{align}
\cosh(b) \cosh(c) &= \left( 1 + \frac{b^2}{2!} + \frac{b^4}{4!} + \mathcal{O}(b^6) \right) \left( 1 + \frac{c^2}{2!} + \frac{c^4}{4!} + \mathcal{O}(c^6) \right) \nonumber \\
&=  1 + \frac{b^2}{2!}+ \frac{c^2}{2!} + \frac{b^4}{4!} + \frac{c^4}{4!} + \frac{b^2 c^2}{2! 2!} + \cdots
\end{align}
For small $b$ and $c$, such that  $(bc)^2 \ll 6$, we can neglect the additional $(bc)^2$  and higher-order terms, leading to the approximation
\begin{equation}
\cosh(b) \cosh(c)  \approx 1 + \frac{(\sqrt{b^2 + c^2})^2}{2!} + \frac{(\sqrt{b^2 + c^2})^4}{4!} + \cdots.
\end{equation}
Comparing this with the Taylor series of $\cosh(x)$, we recognize that
\begin{equation}
\cosh(b) \cosh(c)  \approx \cosh(\sqrt{b^2 + c^2}).
\end{equation}
Thus, for small $a$, $b$ and $c$, the equation $\cosh(a) = \cosh(b) \cosh(c)$, admits the approximate solution $a^2 \approx b^2 + c^2$.
\end{proof}

\begin{corollary}\label{corollary:cosh} Let $a$, $b$, $c$ and $d$ be real-valued scalars. For sufficiently small values of $b$, $c$ and $d$, the equation $\cosh (a) = \cosh(b)\cosh(c)\cosh(d)$, admits the approximate solution $a^2 \approx b^2 + c^2 + d^2$.
\end{corollary}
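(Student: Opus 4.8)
The plan is to obtain the corollary directly from Lemma~\ref{lemma:cosh} by applying it twice, rather than re-deriving a Taylor expansion from scratch. The lemma already establishes that a product of two hyperbolic cosines with small arguments collapses into a single hyperbolic cosine whose argument is the Euclidean combination of the two, namely $\cosh(c)\cosh(d) \approx \cosh(\sqrt{c^2+d^2})$. The strategy is therefore to treat $\cosh(c)\cosh(d)$ as an intermediate object, reduce it via the lemma, and then fold the result together with the remaining factor $\cosh(b)$.

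Concretely, I would first set $e := \sqrt{c^2+d^2}$ and invoke Lemma~\ref{lemma:cosh} on the pair $(c,d)$ to write $\cosh(c)\cosh(d) \approx \cosh(e)$, valid because $c$ and $d$ are small (so that the lemma's hypothesis $(cd)^2 \ll 6$ is met). Substituting this into the defining equation gives $\cosh(a) = \cosh(b)\cosh(c)\cosh(d) \approx \cosh(b)\cosh(e)$. A second application of the lemma, now to the pair $(b,e)$, yields $a^2 \approx b^2 + e^2 = b^2 + c^2 + d^2$, which is exactly the claimed approximate solution. The bookkeeping then amounts to noting that since $c$ and $d$ are small, $e=\sqrt{c^2+d^2}$ is also small, so the smallness hypothesis $(be)^2 \ll 6$ needed for the second application is inherited from the smallness of $b$, $c$, and $d$.

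The only real obstacle is ensuring the hypotheses of Lemma~\ref{lemma:cosh} genuinely propagate through the two applications, i.e. that the intermediate argument $e$ remains within the regime where the lemma's neglected cross terms are negligible; this is a matter of tracking the smallness conditions carefully rather than a substantive difficulty. As an optional cross-check, and to make the statement self-contained, I would verify the result by expanding the triple product $\cosh(b)\cosh(c)\cosh(d)$ to second order directly: the leading nontrivial term is $1 + \tfrac{1}{2}(b^2+c^2+d^2)$, with the discarded contributions being products such as $b^2c^2$ that are fourth order in the small parameters. Matching this against $\cosh(a) = 1 + \tfrac{1}{2}a^2 + \cdots$ reproduces $a^2 \approx b^2+c^2+d^2$ and confirms that the iterated-lemma argument and the direct expansion agree.
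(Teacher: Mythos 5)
Your proposal is correct and takes essentially the same route as the paper: both proofs obtain the corollary by applying Lemma~\ref{lemma:cosh} twice, collapsing one pair of factors into a single $\cosh$ of the Euclidean combination and then folding in the remaining factor (the paper pairs $b,c$ first while you pair $c,d$, which is immaterial by symmetry). Your explicit tracking of the smallness condition on the intermediate argument and the direct second-order expansion cross-check are refinements the paper glosses over, but they do not change the argument.
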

\begin{proof}
    The corollary follows Lemma~\ref{lemma:cosh}. Recall the approximation $\cosh(b)\cosh(c) \approx \cosh(\sqrt{b^2 + c^2})$ for sufficiently small values of $b$ and $c$. Then, one can approximate $\cosh(\sqrt{b^2 + c^2})\cosh(d)$ as $\cosh(\sqrt{b^2 + c^2 + d^2})$ without loss of generality. Hence, the equation $\cosh(a) = \cosh(b)\cosh(c)\cosh(d)$ admits the approximate solution $a^2$ = $b^2 + c^2 + d^2$ with the assumptions of Lemma~\ref{lemma:cosh}. 
\end{proof}

\subsection{Proof of Proposition~\ref{prop:qrbm-grads}}
\label{app:proof-prop1}
\begin{proof}
Let us insert Eq.~\eqref{eq:state_v} into Eq.~\eqref{eq:loss_function} to obtain the loss for a target probability distribution $q$ and model with Hamiltonian $H$ as follows:
\begin{equation}
    \mathcal{L} = -\sum_v  q_v \log {\frac {\text{Tr}[\Lambda_v e^{-H}]} {\text{Tr} [e^{-H}]} } .
\end{equation}
Then, the gradients of the loss function with respect to the parameter $\theta_i$ can be obtained using the chain rule such that
\begin{align}
    \partial_{\theta_i}{\mathcal{L}} &= -\sum_v q_v \frac{\Tr[e^{-H}]}{\Tr[\Lambda_v e^{-H}]} \left( \frac{\partial_{\theta_i}{\Tr[\Lambda_v e^{-H}] \Tr[e^{-H}]} - \Tr[\Lambda_v e^{-H}] \partial_{\theta_i}{\Tr[e^{-H}]} }{\Tr [e^{-H}]^2}\right) \\
    &= -\sum_v q_v \left( \frac{\Tr[\Lambda_v \partial_{\theta_i}{e^{-H}}]}{\Tr[\Lambda_v e^{-H_i}]} - \frac{\Tr[\partial_{\theta_i}{e^{-H}}]}{\Tr[e^{-H}]} \right).
\end{align}
\end{proof}

\subsection{Derivation of derivative of the matrix exponential}
\label{app:proof-exp-mat-der}

Recall that the matrix exponential $\exp (-H)$ can be written as the power series such that
\begin{equation}
    e^{-H} = \sum_{k=0}^{\infty} \frac{(-1)^k }{k!} H^k,
    \label{eq:power-series}
\end{equation}
where $H^0$ is defined as the identity. Then, 
let us recall the following:
\begin{equation}
    \frac{d}{dt}e^{X(t)} = e^{X(t)}\frac{1-e^{-\text{ad}_X}}{\text{ad}_X} \frac{dX(t)}{dt},
\end{equation}
and $\text{ad}_X$ is the adjoint representation given as $\text{ad}_X ( \cdot) = \left[ X, \cdot \right]$~\cite{rossmann_lie_2006}. Then, we also write the following power series:
\begin{equation}
\frac{1-e^{-\text{ad}_X}}{\text{ad}_X} = \sum_{k=0}^{\infty} \frac{(-1)^k}{(k+1)!} \left( \text{ad}_X \right)^k.
\end{equation}

Then, let us insert $X(t)=-H$ and $t=\theta_i$. Then, $dX(t)/dt = -\partial H/\partial \theta_i = -H_i$ and the complete expression reads
\begin{align}
\frac{\partial}{\partial \theta_i}e^{-H} &= e^{-H} \left( \mathbb{1} -\frac{1}{2} \text{ad}_{-H} +\frac{1}{6} (\text{ad}_{-H})^2 + \cdots \right) (-H_i) \nonumber \\
&= e^{-H}\left(-H_i-\frac{1}{2}\left[ H, H_i \right]-\frac{1}{6}\left[ H,\left[ H, H_i \right]\right] + \cdots \right)
\end{align}

\subsection{Proof of Proposition~\ref{prop:sqrbm-probs}}
\label{proof:prop:sqrbm-probs}
\begin{proof}
Let us consider an $\sqRBM_{1,1}$ with $\Wh= \{ X, Z\}$. Then the output probabilities are defined as
\begin{equation}
    p_v = \Tr[\Lambda_v \frac{e^{-H}}{\Tr[e^{-H}]}],
\end{equation}
where $H$ is the Hamiltonian of the model. Notice that one can define the unnormalized probabilities by taking the trace $\Tr[e^{-H}]$ outside, such that 
\begin{equation}
    \tilde{p}_v = \Tr[\Lambda_v e^{-H}].
\end{equation}

Then, it is possible to compute the unnormalized probabilities as follows:
\begin{equation}
\tilde{p}_0 = \Tr[\begin{bmatrix}
       1 & 0 & 0 & 0 \\
       0 & 1 & 0 & 0 \\
       0 & 0 & 0 & 0 \\
       0 & 0 & 0 & 0 \\
   \end{bmatrix} e^{-H} ] \quad \text{and} \quad 
\tilde{p}_1 = \Tr[\begin{bmatrix}
       0 & 0 & 0 & 0 \\
       0 & 0 & 0 & 0 \\
       0 & 0 & 1 & 0 \\
       0 & 0 & 0 & 1 \\
   \end{bmatrix} e^{-H} ].
\end{equation}

Next, let us express the Hamiltonian $H$ as a matrix:

\begin{equation}
    H = \begin{bmatrix}
        a_1 + b_1^Z + w_{1,1}^Z &  b_1^X + w_{1,1}^X & 0 & 0 \\
        b_1^X + w_{1,1}^X & + a_1 - b_1^Z - w_{1,1}^Z & 0 & 0 \\
        0 & 0 & -a_1 + b_1^Z - w_{1,1}^Z & b_1^X - w_{1,1}^X \\
        0 & 0 & b_1^X - w_{1,1}^X & -a_1 - b_1^Z + w_{1,1}^Z
    \end{bmatrix}.
\end{equation}

Then, since $H$ is not a diagonal matrix, one has to diagonalize the matrix $H$ in order to compute $e^{-H}$. This can be done with the help of tools such as $\textsc{Mathematica}$. Then, one obtains $\tilde{p}_0$ and $\tilde{p}_1$ as follows:

\begin{align}
\tilde{p}_0 &= e^{-a_1} 2 \cosh \left(\sqrt{ (b_1^Z + w_{1,1}^Z)^2 + (b_1^X + w_{1,1}^X)^2 }\right), \nonumber \\
\tilde{p}_1 &= e^{a_1} 2 \cosh \left(\sqrt{ (b_1^Z - w_{1,1}^Z)^2 + (b_1^X - w_{1,1}^X)^2 }\right).
\label{eq:probs-sqrbm1-xz}
\end{align}

Notice that all terms that have a Pauli-$Z$ acting on the visible unit get a sign of the visible configuration such that $(-1)^{v_i}$, where $\forall i, v_i \in \{0, 1\}$. Also, the hidden units that corresponds to $X$ and $Z$ form exactly the same expression such that their contribution $\phi_j^k(v)$ is identical and differ only by the parameters of the term. Then, one can generalize this for an $\sqRBM_{n,m}$ with $n$ visible and $m$ hidden units by defining the following:
\begin{equation}
    \phi_j^{k}(v) = b_{j}^{k} + \sum_{i=1}^{n} (-1)^{v_i} w_{i,j}^{Z,k},
\end{equation}
\begin{equation}
    \Phi_j(v) = \begin{bmatrix}
        \phi_j^{X}(v) & \phi_j^{Y}(v) & \phi_j^{Z}(v)
    \end{bmatrix}
\end{equation}
\begin{equation}
    \tilde{p}_v = \left( \prod_{i=1}^{n} e^{-(-1)^{v_i}a_{i}^{Z}} \right) \left(  \prod_{j=1}^{m} \cosh(||\Phi_j(v)||_2) \right),
\end{equation}
where $v_i$ is the binary value of the bitstring $v$ on the $i$-th index, such that $v \in \{0, 1\}^n$. Here, we also ignore the factor of two for simplicity that appears in Eq.~\eqref{eq:probs-sqrbm1-xz}, as it will cancel-out during normalization. Finally, the probabilities of the model are given with the normalization:
\begin{equation}
    p_v = \tilde{p}_v / \sum_v \tilde{p}_v.
\end{equation}
\end{proof}

\subsection{Proof of Proposition~\ref{prop:sqrbm-grads}}
\label{proof:prop:sqrbm-grads}
\begin{proof}
Let us consider the gradient of a field term acting on the visible units denoted with $a_i$. Then, using Eq.~\eqref{eq:sqrbm-probs}, one can compute the following partial derivatives:
\begin{align}
    \partial_{a_i} \tilde{p}_v &= -(-1)^{v_i} \tilde{p}_v \\    
    \partial_{a_i} \log \tilde{p}_v &= \frac{\partial_{a_i} \tilde{p}_v }{\tilde{p}_v} = \frac{-(-1)^{v_i} \tilde{p}_v}{\tilde{p}_v} = -(-1)^{v_i} \\
    \partial_{a_i} \log Z &= \frac{\partial_{a_i} \sum_v \tilde{p}_v}{\sum_v \tilde{p}_v} =  \frac{\sum_v -(-1)^{v_i} \tilde{p}_v}{\sum_v \tilde{p}_v} 
    = \sum_v -(-1)^{v_i} p_v.
\end{align}
Then, these identities can be inserted into the derivative of the loss function (see Eq.~\eqref{eq:loss_function}) to obtain:
\begin{align}
    \partial_{a_i}{\mathcal{L}} &= - \partial_{a_i} \sum_v q_v \log \tilde{p}_v / Z  \nonumber \\
    &= - \sum_v q_v \partial_{a_i} \log \tilde{p}_v + \sum_v q_v \partial_{a_i} \log Z \nonumber \\
    &= \sum_v q_v (-1)^{v_i} - \sum_v q_v \left(\sum_v (-1)^{v_i} p_v \right) \nonumber \\  
    &= \sum_v q_v \left( (-1)^{v_i} - \sum_v (-1)^{v_i} p_v \right).
\end{align}

Next, let us consider the gradient of a field term acting on the hidden units with Pauli-$k$ denoted with $b_i^k$. Then, using Eq.~\eqref{eq:sqrbm-probs}, one can compute the following partial derivatives:
\begin{align}
\partial_{b_j^k}  ||\Phi_j(v)||_2 &= \frac{\phi_j^k(v)}{||\Phi_j(v)||_2} \\
\partial_{b_j^k}  \cosh \left( ||\Phi_j(v)||_2 \right) &= \frac{\phi_j^k(v)}{||\Phi_j(v)||_2} \sinh \left( ||\Phi_j(v)||_2 \right) \\
\partial_{b_j^k}  \tilde{p}_v &= \frac{\phi_j^k(v)}{||\Phi_j(v)||_2} \tanh \left( ||\Phi_j(v)||_2 \right) \tilde{p}_v \\
\partial_{b_j^k}  \log Z &= \frac{\partial_{b_i} \sum_v \tilde{p}_v}{\sum_v \tilde{p}_v} = \sum_v \frac{\phi_j^k(v)}{||\Phi_j(v)||_2} \tanh \left( ||\Phi_j(v)||_2 \right) p_v.
\end{align}
Then, these identities can be inserted into the derivative of the loss function (see Eq.~\eqref{eq:loss_function}) to obtain:
\begin{align}
    \partial_{\theta_j^{k}}{\mathcal{L}} &= - \sum_v q_v\frac{\Phi_j^k(v)}{||\Phi_j(v)||_2} \tanh \left( ||\Phi_j(v)||_2 \right) + \sum_v  q_v \left( \sum_v \frac{ \Phi_j^k(v)}{||\Phi_j(v)||_2} \tanh \left( ||\Phi_j(v)||_2 \right) p_v \right) \nonumber \\
    &= - \sum_v q_v \left( \frac{\Phi_j^k(v) }{||\Phi_j(v)||_2} \tanh \left( ||\Phi_j(v)||_2 \right) - \sum_v \frac{ \Phi_j^k(v)}{||\Phi_j(v)||_2} \tanh \left( ||\Phi_j(v)||_2 \right)  p_v  \right).
\end{align}

Last but not least, let us consider the gradient of an interaction term acting on a visible and a hidden unit with $Z \otimes k$ denoted with $w_{i,j}^{Z,k}$. Then, using Eq.~\eqref{eq:sqrbm-probs}, one can compute the following partial derivatives:
\begin{align}
    \partial_{w_{i,j}^{Z,k}}  ||\Phi_j(v)||_2 &= \frac{(-1)^{v_i} \phi_j^k(v)}{||\Phi_j(v)||_2}\\
    \partial_{w_{i,j}^{Z,k}}  \tilde{p}_v &= \frac{(-1)^{v_i} \phi_j^k(v)}{||\Phi_j(v)||_2} \tanh \left( ||\Phi_j(v)||_2 \right) \tilde{p}_v.
\end{align}
Then, these identities can be inserted into the derivative of the loss function (see Eq.~\eqref{eq:loss_function}) to obtain:
\begin{align}
    \partial_{w_{i,j}^{Z,k}}{\mathcal{L}} = - \sum_v q_v \left( (-1)^{v_i}\frac{\phi_j^k(v) }{||\Phi_j(v)||_2} \tanh \left( ||\Phi_j(v)||_2 \right) - \sum_v (-1)^{v_i}\frac{\phi_j^k(v) }{||\Phi_j(v)||_2} \tanh \left( ||\Phi_j(v)||_2 \right)  p_v  \right).
\end{align}
\end{proof}

\subsection{Proof of Proposition~\ref{prop:sqbm-grads}}
\label{proof:prop:sqbm-grads}
\begin{proof}
Recall the following expression for the gradients of a model with Hamiltonian $H$ and target distribution $q$:
\begin{align}
    \partial_{\theta_i}{\mathcal{L}} = -\sum_v q_v \left( \frac{\Tr[\Lambda_v \partial_{\theta_i}{e^{-H}}]}{\Tr[\Lambda_v e^{-H_i}]} - \frac{\Tr[\partial_{\theta_i}{e^{-H}}]}{\Tr[e^{-H}]} \right).
\end{align}
Let us insert the power series of $e^{-H}$ to compute $\Tr[\partial_{\theta_i}{e^{-H}}]$:
\begin{align}
    \Tr[\partial_{\theta_i}{e^{-H}}] &= \Tr[e^{-H}\left(-H_i-\frac{1}{2}\left[ H, H_i \right]-\frac{1}{6}\left[ H,\left[ H, H_i \right]\right] + \cdots \right)] \nonumber \\
    &= -\Tr[e^{-H}H_i]-\frac{1}{2}\Tr[e^{-H}\left[H, H_i\right]]-\frac{1}{6}\Tr[e^{-H}\left[H,\left[H, H_i\right]\right]] + \cdots
\end{align}
Then, we observe that all the traces that include the commutator result in zero. Let us write it explicitly for the first term
\begin{align}
    \Tr[e^{-H}\left[H, H_i\right]] &= \Tr[e^{-H} H H_i]-\Tr[e^{-H} H_i H] \nonumber \\
    &=\Tr[H e^{-H} H_i]-\Tr[e^{-H} H_i H] \nonumber \\
    &=\Tr[e^{-H} H_i H]-\Tr[e^{-H} H_i H] = 0.
\end{align}

A similar reduction is possible for the term $\Tr[\Lambda_v \partial_{\theta_i}{e^{-H}}]$. Recall that Definition~\ref{def-sq-bm} assumes the commutation of the model Hamiltonian with the projector such that $[\Lambda_v, H]=0$. Then, we observe
\begin{align}
\Tr[\Lambda_v e^{-H}\left[ H, H_i \right]] &= \Tr[\Lambda_v e^{-H}H H_i] - \Tr[\Lambda_v e^{-H} H_i H]  \nonumber \\
&= \Tr[H \Lambda_v e^{-H} H_i] - \Tr[\Lambda_v e^{-H} H_i H] \nonumber \\
&= \Tr[\Lambda_v e^{-H} H_i H] - \Tr[\Lambda_v e^{-H} H_i H] = 0.
\end{align}

Since both terms in the gradient expression have a non-zero contribution from only the first term of the power series, the expression takes the following simple form:
\begin{align}
    \partial_{\theta_i}{\mathcal{L}} = -\sum_v q_v \left( -\frac{\Tr[\Lambda_v e^{-H} H_i]}{\Tr[\Lambda_v e^{-H_i}]} + \frac{\Tr[e^{-H} H_i]}{\Tr[e^{-H}]} \right).
\end{align}
\end{proof}

\subsection{Proof of Theorem~\ref{theorem:units}}
\label{proof:theorem:equivalance}
\begin{proof}
Let us consider an $\sqRBM_{n,1}$ with $n$ visible and a single hidden unit with the operator pool $\Wh = \{X,Y,Z \}$. Let us set the parameters of the field terms that act on the visible units to zero only for simplicity. Then, following Eq.~\eqref{eq:sqrbm-probs}, the unnormalized output probability of this model can be written as 
\begin{equation}
\tilde{p}_v = \cosh(\sqrt{\phi_1^X(v)^2+\phi_1^Y(v)^2+\phi_1^Z(v)^2}),
\label{eq:proof:sqrbmnn1-probs1}
\end{equation}
where we also omit constant factors, since they cancel each other during normalization. Then, using Corollary~\ref{corollary:cosh}, for small values of all parameters, the output probabilities can be written as
\begin{equation}
\tilde{p}_v \simeq \cosh(\phi_1^X(v)) \cosh(\phi_1^Y(v))\cosh(\phi_1^Z(v)).
\label{eq:proof:sqrbmnn1-probs2}
\end{equation}

Next, let us write the unnormalized output probability of an $\RBM_{n,3}$ with $n$ visible and $3$ hidden units with the same assumptions
\begin{equation}
\tilde{p}_v = \cosh(\phi_1^Z(v)) \cosh(\phi_2^Z(v))\cosh(\phi_3^Z(v)).
\label{eq:proof:rbmnn3-probs2}
\end{equation}
Then, it can be seen that Eq.~\eqref{eq:proof:sqrbmnn1-probs2} and Eq.~\eqref{eq:proof:rbmnn3-probs2} are in the same form. Consequently, one can find an equivalence with the following solution:
\begin{equation}
\phi_1^Z(v) \simeq \phi_1^Z(v), \quad \phi_2^Z(v) \simeq \phi_1^X(v), \quad \phi_3^Z(v) \simeq \phi_1^Y(v),
\end{equation}
which leads to 
\begin{equation}
    b_1^Z \simeq b_1^Z, \quad w_{n,1}^Z \simeq w_{n,1}^Z, \quad b_2^Z \simeq b_1^X, \quad w_{n,2}^Z \simeq w_{n,1}^X, \quad b_3^Z \simeq b_1^Y, \quad w_{n,3}^Z \simeq w_{n,1}^Y,
\end{equation}
and all visible unit parameters ($\boldsymbol{a}$), which we have omitted so far, can be directly considered to be equal without loss of generality. Then, by setting all parameters in $\sqRBM_{n,1}$ to some other parameter of $\RBM_{n,3}$, one obtains approximately the same output probability distribution.

For larger parameter values, the equivalence in the expressivity will still hold as Eq.~\eqref{eq:proof:sqrbmnn1-probs1} forms the product of three $\cosh$ functions with corrections. However, the direct correspondence of the parameters would no longer hold, and a system of equations has to be solved in order to find a parameter mapping.

The result trivially generalizes to any non-commuting operator set $\Wh$ as it determines the number of $\cosh$ products one can obtain. Hence, the equivalence statement holds for $\sqRBM_{n,1}$ and $\sqRBM_{n,|\Wh|}$.
\end{proof}

\newpage
\section{Numerical behavior of  gradients}\label{app:numerics-grads}

To address potential caveats in training, we report the variance of gradients for $\RBM$ and $\sqRBM$ in Figure~\ref{fig:var-grads-parity} using Proposition~\ref{prop:sqrbm-grads}. Using the parity dataset, we compute the gradients for 10000 independent random initializations from a uniform distribution between $[-10, 10]$. We report values for three types of parameters:
\begin{itemize}
    \item $a_1$ is the field term on the first visible unit 
    \item $b_1$ is the field term on the first hidden unit and can have variations such as $b_1^X$, $b_1^Y$ and $b_1^Z$ depending on the model. 
    \item $w_{1,1}$ is the interaction term on the first visible and first hidden unit and can have variations such as $w_{1,1}^{Z,X}$, $w_{1,1}^{Z,Y}$ and $w_{1,1}^{Z,Z}$ depending on the model.
\end{itemize}

We report values for $n \in \{4,6,8,10,12\}$ and $m \in \{1,2,3,4,5,6\}$, where colors and markers denote the Pauli type ($X$, $Y$ or $Z$). Results for each value of $n$ are plotted with the same color and markers. Since the variability is small, most of the lines and markers overlap with each other on the same panel.

The numerical results indicate that for each model, with a fixed number of visible units, the variance trend remains flat as the number of hidden units increases. This behavior suggests that $\sqRBM$ gradients behave very similarly to $\RBM$, and they do not suffer from vanishing gradients.
 
\begin{figure}[!h]
    \centering
    \includegraphics[width=\linewidth]{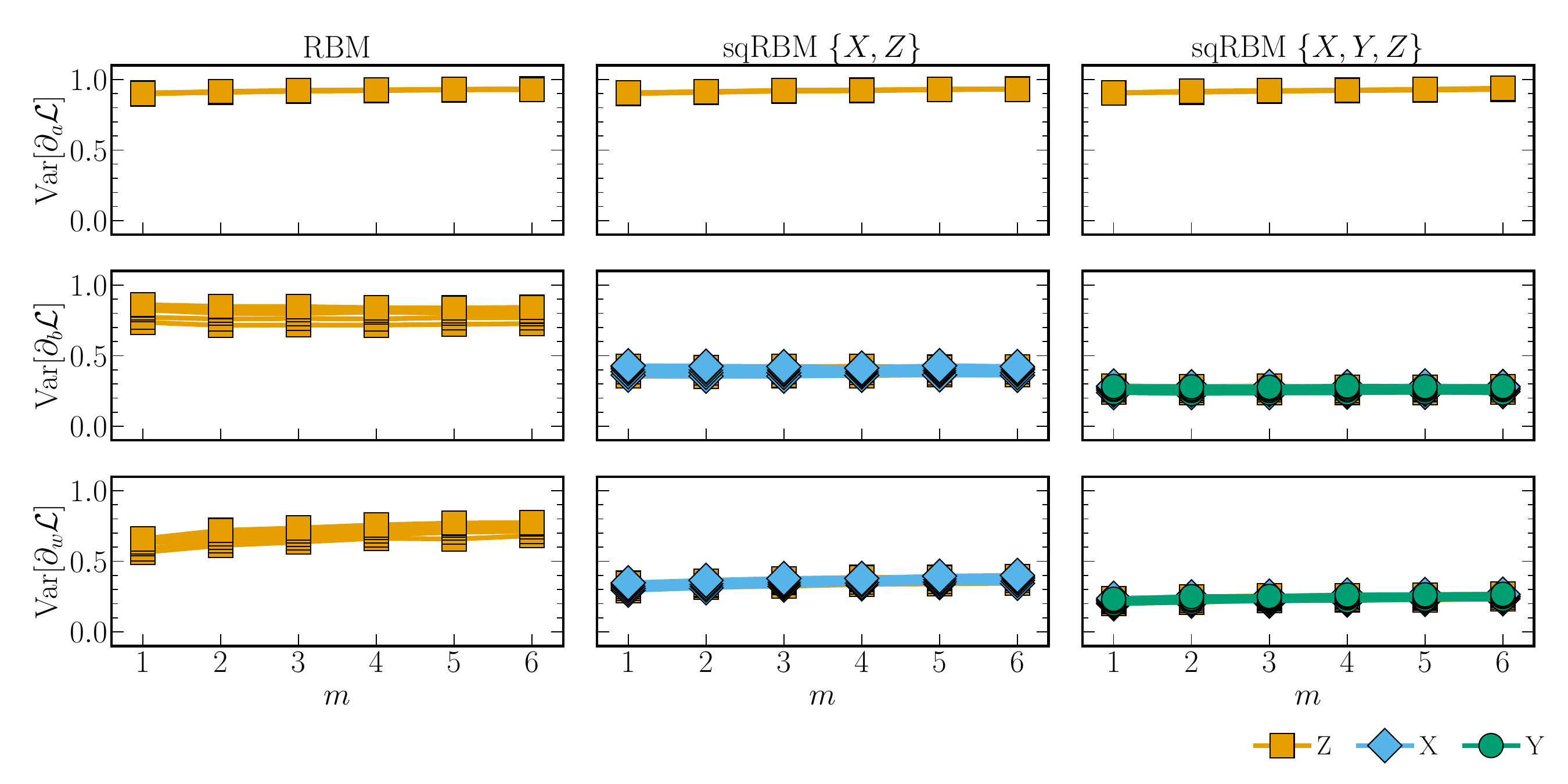}
    \caption{\textbf{Variance of gradients.} We report the variance of the gradients for the first parameter of different types of parameters ($a_1$, $b_1$, $w_{1,1}$). Each row shows results for different types of parameters, while each column shows results for different models. We report values for $n \in \{4,6,8,10,12\}$ and $m \in \{1,2,3,4,5,6\}$ and on all panels the variances show a similar behavior with very small variation with system size, as it can be observed with the overlapping lines. For models with $X$, $Y$ or $Z$ terms, we plot the results with a different marker, which also overlap.}
    \label{fig:var-grads-parity}
\end{figure}

\end{document}